\def\stack#1#2{{\substack{#1 \\ #2}}}
\def\midd{\:\middle|\:}
\def\set#1{\left\{ #1 \right\}}
\def\prn#1{\left( #1 \right)}
\def\argmax{\operatornamewithlimits{arg\,max}}
\newtheorem{theorem}{Theorem}
\newtheorem*{theorem*}{Theorem}
\newtheorem{lemma}{Lemma}
\theoremstyle{definition}
\newtheorem{definition}{Definition}
\theoremstyle{condition}
\newcommand{\hide}[1]{}
\newcommand{\s}[1]{\mathsf{#1}}
\newcommand{\x}{\mathbf{x}}
\newcommand{\y}{\mathbf{y}}
\newcommand{\N}{\mathbb N}
\newcommand{\Z}{\mathbb Z}
\newcommand{\Zp}{\mathbb{Z}^+}
\newcommand{\R}{\mathbb R}
\title{(Almost) Envy-Free, Proportional and Efficient Allocations of an Indivisible Mixed Manna\footnote{Supported by NSF grant CCF-1750436 (CAREER)}}
\author{Vasilis Livanos\footnote{University of Illinois at Urbana-Champaign, USA} \\
\texttt{\small livanos3@illinois.edu} 
\and
Ruta Mehta\footnote{University of Illinois at Urbana-Champaign, USA}\\
\texttt{\small rutameht@illinois.edu}
\and
Aniket Murhekar\footnote{University of Illinois at Urbana-Champaign, USA}\\
\texttt{\small aniket2@illinois.edu}
}
\date{}
\begin{document}

\maketitle

\begin{abstract}
We study the problem of finding fair and efficient allocations of a set of indivisible items to a set of agents, where each item may be a good (positively valued) for some agents and a bad (negatively valued) for others, i.e., a mixed manna. As fairness notions, we consider arguably the strongest possible relaxations of envy-freeness and proportionality, namely  envy-free up to any item (EFX and EFX$_0$), and proportional up to the maximin good or any bad (PropMX and PropMX$_0$). Our efficiency notion is Pareto-optimality (PO).

We study two types of instances: 
$(i)$ {\em Separable}, where the item set can be partitioned into goods and bads, and $(ii)$ {\em Restricted mixed goods (RMG)}, where for each item $j$, every agent has either a non-positive value for $j$, or values $j$ at the same $v_j>0$. We obtain polynomial-time algorithms for the following:
 \begin{itemize}
     \item Separable instances: PropMX$_0$ allocation.
    \item RMG instances: Let  {\em pure bads} be the set of items that everyone values negatively.
    \begin{itemize}
        \item 
    PropMX allocation for general pure bads.
    \item EFX+PropMX allocation for  identically-ordered pure bads.
    \item  EFX+PropMX+PO allocation for identical pure bads. 
    \end{itemize}
 \end{itemize}

Finally, if the RMG instances are further restricted to {\em binary mixed goods} where all the $v_j$'s are the same, we strengthen the results to guarantee EFX$_0$ and PropMX$_0$ respectively.

\end{abstract}

\newcommand{\efx}{EFX}
\newcommand{\efxzero}{$\text{EFX}_0$}
\newcommand{\propm}{PropM}
\newcommand{\propx}{PropX}
\newcommand{\propmzero}{$\text{PropM}_0$}
\newcommand{\prop}{\s{Prop}}
\newcommand{\propmx}{PropMX}
\newcommand{\propmxzero}{$\text{PropMX}_0$}

\let\oldnl\nl
\newcommand{\nonl}{\renewcommand{\nl}{\let\nl\oldnl}}

\newtheorem{innercustomthm}{Theorem}
\newenvironment{customthm}[1]
  {\renewcommand\theinnercustomthm{#1}\innercustomthm}
  {\endinnercustomthm}


\section{Introduction}\label{sec:intro}

The problem of fair division is concerned with allocating items to agents in a \textit{fair} and \textit{efficient} manner. Formally 
introduced by Steinhaus \cite{steinhaus}, fair division is an active area of research studied across fields like computer science and 
economics. Most work has focused on the fair division of \textit{goods}: items which provide non-negative \textit{value} (or utility) to the agents to whom they are allocated. However, several practical scenarios involve \textit{bads} (or chores). Bads are items which impose a \textit{cost} (or disutility) to the agent to whom they are allocated. Generalizing both settings, we study fair division of a set $M$ of indivisible items, where each $j\in M$ can be a good for some agents and a bad for others -- a {\em mixed manna}. Examples of mixed manna include splitting assets and liabilities when dissolving a partnership, dividing tasks among various team members, and deciding teaching assignments between faculty. 
The valuation of agent $i$ for a set $S\subseteq M$ of items is defined by an additive function $v_i(S)=\sum_{j\in S}v_{ij}$ where $v_{ij}\in \mathbb R$ is the value agent $i$ has for item $j$. We say that item $j$ is a good for agent $i$ if $v_{ij}\ge 0$, otherwise it is a bad for her.

Arguably, the two most popular fairness notions are of  envy-freeness (EF) \cite{foleyEF, VARIAN1974-efpo} and proportionality (Prop) \cite{steinhaus}. EF requires that every agent 
(weakly) prefers her own allocation than anyone else's, while Prop requires that every agent gets her proportional value, i.e.,$\frac{1}{n}$-fraction of her value for all the items. When items are \textit{divisible}, both EF and Prop allocations are known to exist. However, in the case of \textit{indivisible} items, neither may exist for very simple examples, like allocating one good among two agents who value it equally. 

One of the strongest relaxations of EF for indivisible items is
{\em EF up to any item (EFX)} \cite{caragiannis16nsw-ef1}. That is, for goods manna, every agent $i$ does not envy any other agent $k$ after removal of a (positively-valued) good from $k$'s bundle, and for bads manna the envy vanishes after removal of any bad from $i$'s bundle. Generalizing these to mixed manna, we say that an allocation is EFX if each agent $i$ does not envy another agent $k$ after either removal of a positively-valued good from $k$'s bundle or removal of a bad from $i$'s bundle. 

The analogous relaxation for proportionality is the notion of \textit{proportionality up to any item} (PropX), defined by Aziz
et al. \cite{AzizMoulinSandom}. For goods, an allocation is said to be PropX if every agent can receive her proportional share
after the addition to her bundle of any one good not in her bundle. Similarly, for bads, an allocation is said to be PropX if every 
agent can receive her proportional share after removing any one bad assigned to her. While PropX need not exist for goods 
\cite{MoulinAnnual, AzizMoulinSandom}, Li et al. \cite{LiLiWu} showed that a PropX allocation always exists for bads and can be
computed in polynomial time. Since PropX need not exist for goods, Baklanov et al. \cite{BaklanovGGS} introduce the notion of
\emph{proportionality up to the maximin good (PropM)}, a weaker version of PropX.  
They show that PropM allocations always exist
for the setting of goods and agents with additive valuations, and that they can be computed in polynomial time.

Relaxations of Prop have not been studied previously for indivisible mixed manna. Combining the strongest possible guarantees, we define the following relaxation for mixed manna: 
\emph{proportionality up to the maximin good or any bad (PropMX)}. Informally, an allocation is said to be PropMX if every agent can receive her proportional share after the addition to her bundle of the \emph{maximin} positively-valued good for that agent,
or after the removal of any one bad assigned to her.

In addition to being fair, it is important for allocations to be \textit{efficient}. The standard notion of economic efficiency is Pareto optimality (PO). An allocation is said to be PO if no other allocation makes an agent better off without making someone else worse off.

In this paper we ask if it possible to achieve all of EFX, PropMX, and PO in a single allocation. And if yes, then can it be computed in polynomial time. The answer is not known even for goods (bads) only manna. In particular, existence of an EFX allocation is a celebrated open question even for goods (bads) manna \cite{procaccia20}. And therefore the focus has been on resolving special cases (see Section~\ref{sec:related} for a detailed discussion). 
Furthermore, it has been noted that mixed manna is significantly harder to handle than the goods (bads) manna \cite{bogomolnaia2017manna, KulkarniMehtaTaki}. Next we describe our contributions in this context.

\subsection{Our Contributions.}

We study the problem of computing \efx+PO and \propmx+PO allocations for mixed manna instances. Given that computing EFX (and EFX+PO) allocations is a challenging problem even for general instances of goods, a significant amount of past works have focused on sub-classes by restricting the values that agents have for the items, see~\cite{amanatidis2020mnwefx,murhekar2021aaai,barman2018binarynsw,garg2021sagt, aleksandrov2019greedy,aleksandrov2020algorithms} and references therein for practical scenarios involving different sub-classes.
Among these, the valuation classes of identical \cite{plaut2018efx,barman2018binarynsw,aleksandrov2019greedy,aleksandrov2020algorithms}, identical order preferences (IDO) \cite{plaut2018efx,LiLiWu}, restricted additive \cite{bansal2006santaclaus}, and binary \cite{barman2018binarynsw,nisarg20binaryonerule,bhaskar2020chores} are well-studied for goods (bads) manna. We study all of these for mixed manna. 

First, to extend these to mixed manna we need to partition the items into three sets: the set $M^+$ of \textit{mixed goods}, which are valued positively by at least one agent; the set $M^0$ of \textit{dummy bads}, which are not valued positively by any agent but may be valued at zero by some; and set $M^-$ of \textit{pure bads} which are valued negatively by all agents. 

We consider the following:
\begin{itemize}
    \item \emph{Separable} instances: For every item $j$, all the agents value it either non-negatively or negatively. That is the item set can be partitioned in to goods (non-negatively valued) and bads (negatively valued).  
    \item \emph{Restricted Mixed goods} instances: For every item $j \in M^+$ there exists a value $v_j > 0$ such that if an agent values $j$ positively, then she values it at $v_j$. 
    Furthermore, if $v_j = v_{j'}$ for all $j, j' \in M^+$, then the instance is called a \emph{binary mixed goods} instance.
\end{itemize}

Note that the separable and RMG instances are incomparable. 
We obtain following results for these settings.
\begin{itemize}
    \item For \textbf{Restricted Mixed Goods}, we give polynomial-time algorithms for computing allocation that is
    \begin{itemize}
        \item \propmx+\efx+PO when agents have identical valuation over the pure bads $M^-$ (Theorem~\ref{thm:efxpo}).
        \item \propmx+\efx~when agents have identical ordinal preference (IDO) over the pure bads (Theorem~\ref{thm:restrictedgoodsefx-ido}).
        \item \propmx~for the case of \textbf{general pure bads} (Theorem~\ref{thm:restrictedgoodsefx-gen}).
    \end{itemize}
    \item For the special case of \textbf{binary mixed goods}, we strengthen the previous results to \efxzero~and \propmxzero~ (Theorem~\ref{thm:efxzeropo}) respectively.\footnote{\efxzero~is a stronger notion where the EFX condition wrt goods also considers zero valued items. Similarly, \propmxzero~is a stronger notion where the maximin good in the definition of PropMX is allowed to have zero value.}
    \item In showing the above results, we show that an EFX+PO allocation can be computed in polynomial-time for a \textbf{goods} only manna with \textbf{restricted valuations} (Theorem~\ref{thm:restrictedgoodsefx}). This class is orthogonal to the class of IDO goods, for which EFX allocations are known to be efficiently computable~\cite{plaut2018efx}. 
    \item For \textbf{separable} instances, we present an algorithm which returns, in polynomial time, an allocation that is \propmxzero~(Theorem~\ref{thm:ido-separ}).
\end{itemize}

We note that as a corollary we obtain some of the results of ~\cite{aleksandrov2019greedy} and \cite{aleksandrov2020algorithms} for the binary and identical mixed manna. We observe that \efx~(resp. \efxzero) implies \propmx~(resp. \propmxzero), and therefore whenever we get \efx~we get \propmx, but not vice-versa. Furthermore, we show via a counterexample that one cannot hope to obtain a \propmxzero+PO allocation, even for the goods manna (Appendix~\ref{app:counter}), and hence the first two results for the RMG instances can not be strengthened.

The subclasses we consider are interesting from a practical viewpoint as well. In many settings, there is a subset of “interested agents” for an item who all value the item the same, and the rest of the agents do not value it. For example, in a partnership dissolution, a risky asset $j$ could be seen as a good by some agents, who all value it at $v_j$ (because the forecasted return is $v_j$), but other agents might not be interested or even value it negatively due to the inherent risk. This is captured by restricted (mixed) goods. Likewise, identical and binary preferences often arise in practice; binary valuations can be used by agents to indicate approval or indifference towards a particular good.

\subsection{Other Related Work}\label{sec:related}

\paragraph{\bf EF1 and EFX.} The envy-cycle algorithm~\cite{lipton} showed that allocations that are envy-free up to the removal of the most valued good (EF1) can be computed in polynomial-time; EF1 is a weaker fairness notion than EFX. This algorithm was adapted to show that EFX allocations for IDO goods can be computed in polynomial-time~\cite{plaut2018efx}. For bads, ~\cite{bhaskar2020chores} showed a variant of envy-cycle algorithm (using the top-envy graph) can be adapted for computing an EF1 allocation for separable instances (which they refer to as doubly monotone instances). In the goods setting, EFX allocations exist for 3 agents~\cite{chaudhury2020efx}, and for the class of bivalued instances~\cite{amanatidis2020mnwefx}.

\paragraph{\bf EF1/EFX + PO.} 
The following results are for the case of goods. Barman et al.~\cite{Barman18FFEA} show that EF1+PO allocations exist for and can be computed in pseudo-polynomial time. Recently, \cite{murhekar2021aaai} showed that for $k$-ary instances with constant $k$ or for instances with constantly many agents, an EF1+PO allocation can be computed in polynomial-time. For bivalued instances, an EFX+PO allocation is poly-time computable~\cite{garg2021sagt}. For bivalued instances of chores,~\cite{garg22chores} showed that an EF1+PO is polynomial-time computable. For a mixed manna with only two agents,~\cite{aziz2019mixedmanna} showed that an EF1+PO allocation is polynomial-time time computable.

\paragraph{\bf Mixed Manna.}
For mixed manna, \cite{AzizMoulinSandom} showed that a Prop1+PO allocation of an indivisible mixed manna is polynomial-time computable. For the special cases of identical or ternary utilities, \cite{aleksandrov2019greedy,aleksandrov2020algorithms} showed that an EFX+PO allocation can be polynomial-time. The competitive division of a mixed manna has also been studied~\cite{garg2020mixedmanna,bogomolnaia2017manna}.

\subsection{Organization}

Section~\ref{sec:prelim} sets up various technical preliminaries on fairness notions and their relations.
Section~\ref{sec:efxpo} describes our results for the RMG setting with identical bads. In Section~\ref{sec:separ}
we discuss our result for Separable instances. In Section~\ref{sec:efx-gen} we extend our results for the RMG setting for IDO and general
bads. We discuss the results along with some open questions in Section~\ref{sec:discussion}. Appendix~\ref{app:counterexamples}
contains two counterexamples of interest.

\section{Preliminaries}\label{sec:prelim}

\paragraph{\bf Problem instance.} 
A \emph{fair division instance} is a tuple $(N,M,V)$, where $N = [n]$ is a set of $n\in\N$ agents,
$M = [m]$ is a set of $m\in\N$ indivisible items, and $V = \{v_1,\dots,v_n\}$ is a set of utility
functions, one for each agent $i\in N$. Each utility function $v_i : M \rightarrow \mathbb{R}$ is
specified by $m$ numbers $v_{ij} \in \R$, one for each item $j\in M$, which denotes the value agent $i$
has for receiving item $j$. When $v_{ij} \geq 0$ for every $i \in N, j \in M$, we call the instance a
\emph{goods} instance. In the case of $v_{ij} < 0$ for every $i \in N, j \in M$, we call the instance a
\emph{bads} instance. Finally, when we place no restrictions on $v_{ij}$, the instance is called a
\emph{mixed manna} instance. We assume that the value functions are additive, that is, for every agent
$i \in N$, and for $S \subseteq M$, $v_i(S) = \sum_{j\in S} v_{ij}$. For notational ease, we write
$v(S - j)$ instead of $v(S\setminus \{j\})$ and $v(S + j)$ instead of $v(S\cup \{j\})$. Throughout this paper, unless stated otherwise, we assume every fair division instance is a mixed manna instance.

\paragraph{\bf Partitioning the mixed manna.} In the mixed manna setting, a useful way of partitioning the set of items $M$ is into the three sets $M^+, M^0$, and $M^-$, where:

\begin{enumerate}
\item $M^+ = \{j \in M : \exists i \in N, \: \: v_{ij} > 0\}$ is the set of \textit{mixed goods},
\item $M^0 = \{j \in M : \forall i \in N, \: \: v_{ij} \leq 0 \text{ and } \exists i \in N, \: \: v_{ij} = 0\}$ is the set of \textit{dummy bads}, and
\item $M^- = \{j \in M : \forall i \in N, \: \: v_{ij} < 0\}$ is the set of \textit{pure bads}.
\end{enumerate}

In other words, $M^+$ comprises of the items which are pure goods for some agent, and may be bads or dummies for others; $M^0$ comprises of items which are dummy for some agent, and are not goods for anyone; $M^-$ comprises of items which are bads for everyone.

\paragraph{\bf Instance types.}
We call a fair division instance $(N, M, V)$ a \emph{separable instance} if we can partition $M$ into $M^{\geq 0}$ and $M^-$, where $M^{\geq 0} := \{j \in M \mid \forall i \in \allowbreak N, \: \: v_{ij} \geq 0\}$, the set of items which are not bads for any agent, and $M^-$ is the set of pure bads.

We also define the setting of \emph{restricted mixed goods}, in which for every $j \in M^+$,
there exists a value $v_j > 0$ such that for all $i \in N$, if $v_{ij} > 0$, then $v_{ij} = v_j$
(notice that if $v_{ij} \leq 0$ for a mixed good $j$, then no restrictions are placed on $v_{ij}$). A
special case of the restricted mixed goods setting is the \emph{binary mixed goods} setting, where
for all $j, j' \in M^+$, $v_j = v_{j'}$.

An instance is called \emph{identical ordering (IDO)}, if all agents have the same ordinal preference for all items, i.e., there exists an ordering of the items in $M$ such that for all agents $i \in N$, $v_{i1} \leq v_{i2} \leq \dots \leq v_{im}$. A special case of an IDO instance is the \emph{identical} setting, in which for every $j \in M$, $v_{ij} = v_{i'j}$ for all
$i,i' \in N$. 

\paragraph{\bf Allocation.} An \emph{allocation} $\x$ of items to agents is an $n$-partition $\x_1, \dots, \x_n$ of the items, where agent $i$ is allotted the bundle $\x_i \subseteq M$, and gets a total utility of $v_i(\x_i)$.

\paragraph{\bf Pareto-optimality.} An allocation $\y$ Pareto-dominates an allocation $\x$ if $v_i(\y_i) \geq v_i(\x_i), \forall i \in N$
and there exists $h\in N$ s.t. $v_h(\y_h) > v_h(\x_h)$. An allocation is said to be \textit{Pareto-optimal} (PO) if no Pareto-allocation dominates it.

\paragraph{\bf Welfare functions.} Given an allocation $\x$:
\begin{enumerate}
\item the (utilitarian) social welfare $\s{SW}(\x)$ of $\x$ is the sum of agents' utilities under $\x$, i.e., $\s{SW}(\x) = \sum_{i\in N} v_i(\x_i)$.
\item the Nash welfare $\s{NW}(\x)$ of $\x$ is the geometric mean of the agents' utilities under $\x$, i.e., $\s{NW}(\x) = (\prod_i v_i(\x_i) )^{1/n}$.
\end{enumerate}
Any allocation $\x$ that maximizes the social welfare or the Nash welfare is Pareto-optimal, since a dominating allocation will have higher welfare than $\x$, which is not possible. 

\paragraph{\bf Fairness notions.} We now define the fairness notions of interest.

\begin{definition}\label{def:efx}(Envy-freeness and its relaxations.)
An allocation $\x$ is said to be:
\begin{enumerate}
\item \textit{Envy-free} if for all $i, h \in N$, $v_i(\x_i) \geq v_i(\x_h)$. 
\item \emph{Envy-free up to
any item} (EFX) if for all $i,h \in N$ either 
\begin{enumerate}
\item[(i)] $v_i(\x_i) \geq v_i(\x_h - g) \quad \forall g \in \x_h \text{ s.t. } v_{ig} > 0$, or
\item[(ii)] $v_i(\x_i - c) \geq v_i(\x_h) \quad \forall c \in \x_i \text{ s.t. } v_{ic} < 0$.
\end{enumerate}
\item $\text{EFX}_0$ if for all $i,h \in N$ either 
\begin{enumerate}
\item[(i)] $v_i(\x_i) \geq v_i(\x_h - g) \quad \forall g \in \x_h \text{ s.t. } v_{ig} \ge 0$, or
\item[(ii)] $v_i(\x_i - c) \geq v_i(\x_h) \quad \forall c \in \x_i \text{ s.t. } v_{ic} < 0$.
\end{enumerate}
\end{enumerate}
The difference between the definitions of EFX and \efxzero~is that EFX allows for the envy of an agent $i$ towards agent $h$ to disappear after removing any positively-valued item from the bundle of $h$, whereas in \efxzero~this envy must disappear after removing any non-negative valued item. Thus, \efxzero~is a stronger notion than \efx, and it is easy to see that any \efxzero~allocation is \efx, but not vice-versa.

We say that an agent $i$ \textit{envies} an agent $h$ if $v_i(\x_i) < v_i(\x_h)$. Likewise, we say that an agent $i$ \emph{\efx-envies} (resp. \efxzero-envies) an agent $h$ if neither conditions (i) nor (ii) of (2) (resp. (3)) hold for $i$ with respect to $h$.
\end{definition}

\begin{definition}\label{def:envy-graph}(Envy Graph.) The \textit{envy-graph} of an allocation $\x$ is a directed graph
$G_\x = (N,E)$ where each agent is a node and there exists an edge from agent $i$ to agent $h$ if and only if
$v_i(\x_i) < v_i(\x_h)$, i.e. if and only if $i$ envies $h$. The \textit{top envy-graph} $G^*_\x = (N,E)$ is a directed graph where 
there is an edge from $i\in N$ to $h\in N$ if $i$ envies $h$ and additionally $h \in \s{argmax}_{i' \neq i} {v_i(\x_{i'})}$.

A \textit{source} in $G_\x$ is an agent with in-degree zero, i.e., an agent who nobody envies. A \textit{sink} in $G_\x$ is an agent with 
out-degree zero, i.e., and agent who envies nobody. 
Notice that if there exists a (directed) cycle $C$ in the envy-graph for some allocation, then we can reallocate bundles among the agents in $C$ in the reverse order and all agents in $C$ receive a bundle that they prefer
over their current bundle, while the utility of agents outside of $C$ does not change. This procedure is called \emph{envy-cycle
elimination} and results in a Pareto-improvement.
\end{definition}

\begin{definition}\label{def:propm}(Proportionality and its relaxations.) An allocation $\x$ is said to be:
\begin{enumerate}
\item \textit{Proportional} if for all agents $i \in N$, we have $v_i(\x_i) \geq \prop_i$, where $\prop_i = \frac{1}{n} \cdot v_i(M)$ is the \textit{proportional share} of an agent. 
\item \emph{Proportional up to the maximin
good} (\propm) for a goods instance if, for all $i \in N$:
\[
v_i(\x_i) + d_i(\x) \geq \prop_i,
\]
where $d_i(\x) = \max_{i' \neq i} \min_\stack{j \in \x_{i'}}{v_{ij} > 0} {v_{ij}}$. The corresponding item maximizing the expression of $d_i(\x)$ is called the \textit{maximin good} of $i$ for $\x$.
\item \propmzero~for a goods instance if for all $i \in N$:
\[
v_i(\x_i) + d_i(\x) \geq \prop_i,
\]
where $d_i(\x) = \max_{i' \neq i} \min_\stack{j \in \x_{i'}}{v_{ij} \ge 0} {v_{ij}}$.
Note that \propmzero~is a more demanding condition than \propm, and it is easy to see that any \propmzero~allocation is \propm, but not vice versa.

\item \emph{Proportional up to any bad} (\propx) for a bads instance if for all $i \in N$ and $\forall c \in \x_i$:
\[
v_i(\x_i - c) \geq \prop_i.
\]
One can analogously define \propx~for goods, however PropX allocations of goods need not always exist. We include an example due to
Aziz, Moulin and Sandomirskiy \cite{MoulinAnnual, AzizMoulinSandom} in Appendix~\ref{app:counter-propx}.
For the mixed manna setting, we combine the definitions above:
\item \textit{Proportional up to the maximin good or any bad} (PropMX) for a MiXed manna instance, if for all $i\in N$ either:
\begin{enumerate}
\item[(i)] $v_i(\x_i) + d_i(\x) \geq \prop_i$, where $d_i(\x) = \max\limits_{i' \neq i} \min\limits_\stack{j \in \x_{i'}}{v_{ij} > 0} {v_{ij}}$, or
\item[(ii)] $\forall c \in \x_i$ such
that $v_{ic} < 0$, $v_i(\x_i - c) \geq \prop_i$.
\end{enumerate}
\item \propmxzero~for a MiXed manna instance, if for all $i\in N$ either:
\begin{enumerate}
\item[(i)] $v_i(\x_i) + d_i(\x) \geq \prop_i$, where $d_i(\x) = \max\limits_{i' \neq i} \min\limits_\stack{j \in \x_{i'}}{v_{ij} \ge 0} {v_{ij}}$, or
\item[(ii)] $\forall c \in \x_i$ such
that $v_{ic} < 0$, $v_i(\x_i - c) \geq \prop_i$.
As before, any \propmxzero~allocation is \propmx, but not vice-versa.
\end{enumerate} 
\end{enumerate}
\end{definition}

\paragraph{\bf Relating the fairness properties.} First notice that for any mixed manna instance, every agent $i$ that has
$v_i(M) \leq 0$ can be trivially \propmxzero~satisfied by allocating no items to them. Therefore, for the remainder of the paper,
we assume that $v_i(M) > 0$ for all agents.

It is easy to see that in the case of an instance with bads only, \efx~(resp.\efxzero) implies \propx (\cite{LiLiWu}, Lemma 3.2). We 
extend this observation and show that this implication continues to hold in the mixed manna setting.

\begin{lemma}\label{lem:efx-propm}
Consider a mixed manna instance and let $\x$ be an \efx~allocation for that instance. Then $\x$ is also \propmx. Further, if $\x$ is \efxzero, then $\x$ is also \propmxzero.
\end{lemma}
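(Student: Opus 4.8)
The plan is to fix an EFX allocation $\x$ and an arbitrary agent $i$, and show that $i$ satisfies one of the two PropMX conditions. Recall that since we have reduced to the case $v_i(M) > 0$, summing over all agents gives $\sum_{h \in N} v_i(\x_h) = v_i(M) > 0$, so in particular $\max_{h \in N} v_i(\x_h) \geq \prop_i > 0$; pick $h^* \in \argmax_h v_i(\x_h)$ and note $v_i(\x_{h^*}) \geq \prop_i$. If $h^* = i$ then $v_i(\x_i) \geq \prop_i$ and condition (i) holds trivially (with the convention $d_i(\x) \geq 0$ when some other bundle has a positively-valued good, or even without needing $d_i$ at all since $v_i(\x_i) \geq \prop_i$). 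So assume $h^* \neq i$, i.e.\ $i$ envies $h^*$ (or is indifferent). I would then split into two cases according to which branch of the EFX condition holds for the pair $(i, h^*)$.

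\emph{Case (i): $v_i(\x_i) \geq v_i(\x_{h^*} - g)$ for every $g \in \x_{h^*}$ with $v_{ig} > 0$.} If $\x_{h^*}$ contains no item that $i$ values positively, then $v_i(\x_{h^*}) \leq 0 < \prop_i$, contradicting $v_i(\x_{h^*}) \geq \prop_i$; hence such a $g$ exists. Choose $g$ to be the \emph{minimum}-valued positive item of $i$ in $\x_{h^*}$, so $v_{ig} = \min_{j \in \x_{h^*}, v_{ij} > 0} v_{ij} \leq d_i(\x)$ (since $d_i(\x)$ is the max over all other agents of exactly this quantity, and $h^* \neq i$). Then $v_i(\x_i) + d_i(\x) \geq v_i(\x_i) + v_{ig} \geq v_i(\x_{h^*} - g) + v_{ig} = v_i(\x_{h^*}) \geq \prop_i$, which is exactly PropMX condition (i) for $i$.

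\emph{Case (ii): $v_i(\x_i - c) \geq v_i(\x_{h^*})$ for every $c \in \x_i$ with $v_{ic} < 0$.} Combining with $v_i(\x_{h^*}) \geq \prop_i$ gives $v_i(\x_i - c) \geq \prop_i$ for every negatively-valued $c \in \x_i$, which is exactly PropMX condition (ii) for $i$. (If $\x_i$ has no negatively-valued item this is vacuously true, and in fact then $v_i(\x_i) = v_i(\x_i - c)$ is still $\geq \prop_i$.) The $\text{EFX}_0$/$\text{PropMX}_0$ statement is handled by the identical argument, except that in Case (i) the item $g$ is chosen as the minimum over items with $v_{ig} \geq 0$, and the bound $v_{ig} \leq d_i(\x)$ now uses the $\text{PropMX}_0$ definition of $d_i(\x)$ (the max over other agents of the min over nonnegatively-valued items); everything else goes through verbatim, and the existence of a nonnegatively-valued $g \in \x_{h^*}$ again follows from $v_i(\x_{h^*}) \geq \prop_i > 0$.

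I do not anticipate a serious obstacle; the only subtlety is bookkeeping around the definition of $d_i(\x)$ — specifically making sure that the minimizing item in $\x_{h^*}$ is a valid candidate in the outer maximum defining $d_i(\x)$ (this needs $h^* \neq i$, which is why the case $h^* = i$ is dispatched first) and that $d_i(\x)$ is well-defined, i.e.\ at least one other bundle contains an item $i$ values positively (resp.\ nonnegatively) — which is guaranteed precisely because $v_i(\x_{h^*}) \geq \prop_i > 0$ forces such an item into $\x_{h^*}$. Care is also needed for edge cases where $\x_i$ or $\x_{h^*}$ is empty or contains only items of one sign, but in each such case the relevant PropMX inequality degenerates to $v_i(\x_i) \geq \prop_i$ or is vacuous, so no separate treatment is really required.
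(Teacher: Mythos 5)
Your proof is correct, but it takes a genuinely different route from the paper's. The paper partitions the agents into the set $N^+$ (whom $i$ does not \efx-envy via the good-removal condition) and $N^-$ (via the bad-removal condition), sums the corresponding inequalities over \emph{all} agents, divides by $n$, and then uses the fact that $\frac{|N^+|}{n}d_i(\x) - \frac{|N^-|}{n}v_{ic}$ is a convex combination of $d_i(\x)$ and $-v_{ic}$, hence bounded by their maximum, to conclude that one of the two \propmx{} conditions must hold. You instead use the averaging argument only once, to locate a single bundle $\x_{h^*}$ with $v_i(\x_{h^*}) \geq \prop_i$, and then apply the \efx{} disjunction to the single pair $(i,h^*)$; whichever branch holds for that pair directly yields the corresponding \propmx{} condition. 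Your argument is more local and avoids the $N^+$/$N^-$ bookkeeping and the convex-combination step entirely; it also makes explicit why $d_i(\x)$ is well-defined exactly when it is needed (since $v_i(\x_{h^*}) \geq \prop_i > 0$ forces a positively-valued, resp.\ nonnegatively-valued, item into $\x_{h^*}$), an edge case the paper's proof glosses over. The one thing to note is that your argument leans on the standing assumption $v_i(M) > 0$ (made in the paper just before the lemma) to get $\prop_i > 0$; within that convention everything checks out, including the dispatch of $h^* = i$ (where $v_i(\x_i)\geq\prop_i$ already gives \propmx{} via condition (ii), since removing a bad only increases $i$'s value) and the \efxzero/\propmxzero{} variant.
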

\begin{proof}
Since $\x$ is \efx, we know that for all $i,h \in N$, either
\begin{equation}\label{eq:efx-pos}
v_i(\x_i) \geq v_i(\x_h) - v_{ig} \quad \forall g \in \x_h \text{ s.t. } v_{ig} > 0, 
\end{equation}
or
\begin{equation}\label{eq:efx-neg}
v_i(\x_i) - v_{ic} \geq v_i(\x_h) \quad \forall c \in \x_i \text{ s.t. } v_{ic} < 0.
\end{equation}
Fix an agent $i \in N$ and let $N^+ \subseteq N$ denote the set of agents that $i$ does not envy up to any item because of \eqref{eq:efx-pos}, and $N^- \subseteq N$ denote the set of agents that $i$ does not envy up to any item because of \eqref{eq:efx-neg}. If, in \eqref{eq:efx-pos}, we select $g \in \x_h$ to be the minimum-value pure good of
$\x_h$ according to $i$'s valuation function, and then select the maximum such good over all $h \in N^+$, we get that this item is the maximin good for agent $i$, and thus, for all $h \in N^+$
\begin{equation}\label{eq:efx-pos2}
v_i(\x_i) + d_i(\x) \geq v_i(\x_h).
\end{equation}
We sum up \eqref{eq:efx-pos2} for all $h \in N^+$ and \eqref{eq:efx-neg} for all $h \in N^-$, making sure the case 
of $h = i$ is counted only once, and then sum both inequalities together, and we get that, for all
$c \in \x_i$ such that $v_{ic} < 0$,
\begin{align*}
n \cdot v_i(\x_i) + |N^+| \cdot d_i(\x) - |N^-| \cdot v_{ic} &\geq v_i(M) \iff \\
v_i(\x_i) + \frac{|N^+|}{n} \cdot d_i(\x) - \frac{|N^-|}{n} \cdot v_{ic} &\geq \frac{1}{n} \cdot v_i(M) = \prop_i.
\end{align*}
Notice that, for any $c \in \x_i$ such that $v_{ic} < 0$, $\frac{|N^+|}{n} \cdot d_i(\x) - \frac{|N^-|}{n} \cdot v_{ic}$ is a convex
combination of $d_i(\x)$ and $-v_{ic}$, and thus $\frac{|N^+|}{n} \cdot d_i(\x) - \frac{|N^-|}{n} \cdot v_{ic} \leq \max\{d_i(\x), -v_{ic}\}$.
Therefore, either
\[
v_i(\x_i) + d_i(\x) \geq \prop_i,
\]
or, for all $c \in \x_i$ such that $v_{ic} < 0$,
\[
v_i(\x_i - c) \geq \prop_i,
\]
thus showing that $\x$ is \propmx.

Finally, if $\x$ is \efxzero, all inequalities still hold if we let
$d_i(\x) = \max_{i' \neq i} \min_\stack{j \in \x_{i'}}{v_{ij} \geq 0} {v_{ij}}$, and thus it follows that $\x$ is also \propmxzero.
\end{proof}

The following lemma shows allocating items to agents who value them at the highest possible value results in a PO allocation. For any $j\in M$, let $\eta_j = \max_{i\in N} v_{ij}$.
\begin{lemma}\label{lem:po} Let $\x$ be an allocation in which every item $j\in M$ is allocated to an agent $i$ s.t. $v_{ij} = \eta_j$. Then $\x$ is PO and maximizes the social welfare.
\end{lemma}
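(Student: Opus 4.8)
The plan is to show directly that $\x$ attains the maximum possible social welfare, and then invoke the observation made just above in the ``Welfare functions'' paragraph, namely that any allocation maximizing $\s{SW}$ is automatically PO.

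First I would compute $\s{SW}(\x)$ exactly. Since $\x$ is a partition of $M$ and the valuations are additive, $\s{SW}(\x) = \sum_{i\in N} v_i(\x_i) = \sum_{i\in N}\sum_{j\in\x_i} v_{ij}$. By hypothesis, whenever $j\in\x_i$ we have $v_{ij} = \eta_j = \max_{i'\in N} v_{i'j}$, and because $\x$ is a partition each $j\in M$ appears in exactly one bundle; hence the double sum collapses to $\sum_{j\in M}\eta_j$.

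Next I would take an arbitrary allocation $\y$ and bound its social welfare from above. Writing $\y$ as a partition and using additivity again, $\s{SW}(\y) = \sum_{j\in M} v_{\sigma(j),j}$, where $\sigma(j)$ denotes the agent receiving $j$ under $\y$. Since $v_{\sigma(j),j}\le \eta_j$ for every $j$, summing over $j$ gives $\s{SW}(\y)\le \sum_{j\in M}\eta_j = \s{SW}(\x)$. Thus $\x$ maximizes social welfare over all allocations. Applying the stated fact that a social-welfare-maximizing allocation is Pareto-optimal (a Pareto-dominating allocation would have strictly larger $\s{SW}$, a contradiction) completes the proof.

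There is essentially no hard step here; the only things to be careful about are that $\x$ genuinely is a valid allocation (each item is handed to a single agent attaining its maximum value, which is possible precisely because items are treated independently) and that additivity is what lets the per-agent sums be rearranged into a per-item sum. Both are routine, so I would keep the write-up to these few lines.
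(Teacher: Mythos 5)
Your argument is correct and matches the paper's proof essentially verbatim: compute $\s{SW}(\x)=\sum_{j\in M}\eta_j$ by regrouping the double sum per item, bound any other allocation's welfare by the same quantity, and conclude PO from welfare maximality. Nothing to change.
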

\begin{proof}
The social welfare of $\x$ is $\s{SW}(\x) = \sum_{i\in N} \sum_{j\in \x_i} v_{ij} = \sum_{i\in N} \sum_{j\in \x_i} \eta_j = \sum_{j\in M} \eta_j$. Further, for any allocation $\y$, $\s{SW}(\y) = \sum_{i\in N} \sum_{j\in \y_i} v_{ij} \le \sum_{i\in N} \sum_{j\in \y_i} \eta_j \le \sum_{j\in M} \eta_j$, since $v_{ij}\le \eta_j$ for every $i,j$. Hence, $\x$ maximizes the social welfare.

If $\x$ is not PO, then there must be an allocation $\y$ which dominates $\x$, i.e., $v_i(\y_i)\ge v_i(\x_i)$ for every $i\in N$ and $v_h(\y_h) > v_h(\x_h)$ for some $h\in N$. Then $\s{SW}(\y) > \s{SW}(\x)$. This gives $\sum_{j\in M'} \eta_j \ge \s{SW}(\y) > \s{SW}(\x) = \sum_{j\in M} \eta_j$, which is a contradiction. Hence $\x$ is PO. 
\end{proof}

\section{The Restricted Mixed Goods Setting}\label{sec:efxpo}

In this section, we investigate whether the fairness notions of \efx, \efxzero, \propmx~or \propmxzero~can be achieved in conjunction with the efficiency notion of Pareto-optimality (PO) in the mixed manna setting with restricted mixed goods. We first note that \propmxzero~and PO are not compatible by presenting an instance in Appendix~\ref{app:counter} for which no allocation is \propmxzero+PO, even when no item is a bad. This also shows that \efxzero+PO allocations needn't exist. Thus, we consider the existence of \propmx+PO and \efx+PO allocations.

\subsection{EFX+PO for Restricted Goods}
We begin by showing that we can obtain an EFX+PO allocation for the setting of \textit{pure goods with restricted valuations}. In such an instance $(N,M,V)$, for every $j\in M$, there exists a $v_j > 0$ s.t. $v_{ij}\in\{0,v_j\}$ for every $i\in N$.

\begin{algorithm}[h]
\caption{EFX+PO For Restricted Goods}\label{alg:envycycle}
\textbf{Input:} Restricted Goods Instance $(N,M,V)$\\
\textbf{Output:} Allocation $\x$
\begin{algorithmic}[1]
\State Order and relabel the goods so that $v_1 \ge v_2 \ge \dots \ge v_m > 0$ 
\State $\x \gets (\emptyset, \dots, \emptyset)$ \Comment{Initial empty allocation}
\While{$M \neq \emptyset$}
\State Pick $j\in M$
\State $N_j \gets \{i\in N: v_{ij} = v_j\}$ \Comment{Agents who value $j$ positively} 
\State Let $G_\x$ be the envy-graph defined by $\x$ \Comment{Def.~\ref{def:envy-graph}}
\State Let $G_j = G_\x[N_j]$ be the sub-graph of $G_\x$ induced by $N_j$
\State Let $i\in N_j$ be a \textit{source} in $G_j$ \Comment{Def.~\ref{def:envy-graph}} 
\State $\x_i \gets \x_i + j$ \Comment{Assign $j$ to $i$}
\State $M \gets M - j$
\EndWhile
\State \Return $\x$
\end{algorithmic}
\end{algorithm}

\begin{theorem}\label{thm:restrictedgoodsefx}
Given a fair division instance of pure goods with restricted valuations, an allocation that is EFX, PO and maximizes the utilitarian social welfare can be computed in polynomial-time.
\end{theorem}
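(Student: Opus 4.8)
Proof proposal.

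The plan is to verify four things about the output $\x$ of Algorithm~\ref{alg:envycycle}: that line~8 always succeeds (a source of $G_j$ exists), that the algorithm runs in polynomial time, that $\x$ is PO and maximizes social welfare, and that $\x$ is EFX. Everything hinges on one structural observation: the algorithm only ever assigns item $j$ to an agent in $N_j$, so throughout the execution each agent's bundle satisfies $\x_i \subseteq \{g : v_{ig}>0\}$. Consequently $v_i(\x_i) = \sum_{g\in\x_i} v_g$, and for any other agent $h$,
$v_i(\x_h) = \sum_{g\in\x_h,\ v_{ig}>0} v_g \le \sum_{g\in\x_h} v_g = v_h(\x_h)$.
(If some good is valued positively by no agent, then $N_j=\emptyset$; such a good can be handed to an arbitrary agent without affecting anyone's utility, PO, or EFX, so we may assume $N_j\neq\emptyset$ for every $j$.)

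First I would show $G_\x$ stays acyclic at every step. If $w_0 \to w_1 \to \cdots \to w_{\ell-1} \to w_0$ were a directed cycle (each $w_k$ envies $w_{k+1}$, indices mod $\ell$), then the displayed inequality gives $v_{w_k}(\x_{w_k}) < v_{w_k}(\x_{w_{k+1}}) \le v_{w_{k+1}}(\x_{w_{k+1}})$ for every $k$, which chains around the cycle to $v_{w_0}(\x_{w_0}) < v_{w_0}(\x_{w_0})$ — impossible. Hence $G_\x$ is always a DAG, so the induced subgraph $G_j = G_\x[N_j]$ is a nonempty finite DAG and therefore has a source; line~8 never fails, and the algorithm terminates after $m$ iterations, each of which builds an envy graph and finds a source in polynomial time (after an $O(m\log m)$ sort), giving the claimed running time. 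Moreover, every item $j$ is assigned to some $i\in N_j$ with $v_{ij}=v_j=\max_{i'}v_{i'j}=\eta_j$ (the max equals $v_j$ because $N_j\neq\emptyset$), so $\x$ is precisely the type of allocation covered by Lemma~\ref{lem:po}, and hence $\x$ is PO and maximizes the utilitarian social welfare.

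For EFX I would maintain the invariant that the current partial allocation is EFX: for all $i,h$, if $i$ envies $h$ then $v_i(\x_h) - v_i(\x_i) \le v_g$ for every $g\in\x_h$ with $v_{ig}>0$. This holds vacuously at the start. When item $j$ is assigned to the source $i^*$ of $G_j$, only $\x_{i^*}$ changes, and it only grows; so $i^*$'s own envy can only shrink, and the invariant is clearly preserved for every pair except possibly $(i,i^*)$ with $i\neq i^*$ and $v_{ij}>0$ (i.e.\ $i\in N_j$; if $v_{ij}=0$ nothing relevant to the pair $(i,i^*)$ changes). For such $(i,i^*)$: because $i^*$ is a source of $G_j$ and $i\in N_j$, agent $i$ did not envy $i^*$ beforehand, so the new envy is $v_i(\x_{i^*}) + v_j - v_i(\x_i) \le v_j$, where $\x_{i^*}$ denotes the bundle before adding $j$; and every good $g\in\x_{i^*}+j$ that $i$ values is either $j$ itself or was assigned in an earlier iteration, hence — using that the goods are processed in non-increasing order of value (line~1) — satisfies $v_g \ge v_j \ge$ the new envy. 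So the invariant survives, and at termination $\x$ is EFX, completing the proof.

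The step needing the most care is the well-definedness argument: a priori, giving item $j$ to a source of the \emph{induced} subgraph $G_j$ could still create a back-edge into that agent from outside $N_j$ and hence a cycle in the full graph $G_\x$; what rules this out is exactly the observation that along any envy edge the envied agent's own utility is at least the envier's own utility. A secondary subtlety, easy to overlook, is that the EFX invariant genuinely relies on processing goods from highest value to lowest — a small two-agent example shows it can fail for other processing orders.
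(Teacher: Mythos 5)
Your proposal is correct and follows essentially the same route as the paper's proof: process goods in non-increasing order of $v_j$, give each to a source of the envy graph induced on $N_j$, conclude PO and welfare-maximality from Lemma~\ref{lem:po}, and preserve EFX by noting that the receiving agent was not envied by anyone in $N_j$ and that any positively-valued good already in her bundle has value at least $v_j$. The only departures are minor: you establish acyclicity of the envy graph directly from the inequality $v_i(\x_h)\le v_h(\x_h)$ (chaining utilities around a hypothetical cycle) rather than from the paper's PO-based argument, and you explicitly handle the degenerate case $N_j=\emptyset$, which the paper's Algorithm~\ref{alg:envycycle} silently assumes away; both are sound.
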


\begin{proof}
We prove this theorem by showing that Algorithm~\ref{alg:envycycle}, which is based on the EnvyCycle procedure, computes an EFX+PO allocation for this setting. 

First note that Algorithm~\ref{alg:envycycle} always allocates a good $j$ to an agent $i$ who values it at the highest, i.e., $v_{ij} = v_j$. Hence by Lemma~\ref{lem:po}, the allocation is PO at every step of the algorithm.

We now show that the partial allocation $\x$ is EFX at every step of Algorithm~\ref{alg:envycycle}. Initially, $\x$ is an empty allocation (Line 2) and hence is trivially EFX. Inductively, let $\x$ be an EFX allocation of goods $1$ through $j-1$, for some $j\in [m]$, and let $G_\x$ be the envy-graph defined by $\x$. To maintain PO, we must allocate item $j$ to an agent $i$ s.t. $v_{ij} > 0$. Among the set of such agents $N_j$ for an item $j$ (Line 5), we allocate $j$ to an agent $i$ who is a source in the sub-graph $G_j$ of $G_\x$ induced by $N_j$. We argue below that such an agent always exists by showing that $G_\x$ is acyclic. Intuitively, $i$ is the right choice because giving the good $j$ to an agent $h\in N_j$ who is already envied by some agent $k$ can cause $k$ to envy $h$ even after the removal of $j$. 

More formally, we show that the allocation $\x'$ is EFX, where $\x'_i = \x_i + j$ and $\x'_h = \x_h$ for every $h\neq i$. Note that any violation of the EFX condition in $\x'$ must involve the agent $i$. Since $\x$ is EFX, we see that for any agent $h\neq i$, $v_i(\x'_i) = v_i(\x_i) + v_{ij} \ge v_i(\x_h - g) = v_i(\x'_h - g)$ for every $g\in \x_h$ with $v_{ig} > 0$. Thus $i$ does not EFX-envy any other agent in $\x'$. 

Consider $h\in N_j$. Since $i$ is a source in $G_j$, $h$ does not envy $i$. Hence $v_j(\x_h) \ge v_h(\x_i)$. Then for any $g\in \x'_i$ with $v_{hg}>0$, we have $v_h(\x'_i - g) = v_h(\x_i) + v_{hj} - v_{hg} \le v_h(\x_i) = v_h(\x'_h)$, since $\x$ is EFX and we ordered the goods so that $v_{hg}\ge v_{hj}$. Thus, $h$ does not EFX-envy $i$ in $\x'$.

Finally, consider $h\notin N_j$. Then $v_{hj} = 0$. Thus $v_h(\x'_h) = v_h(\x_h) \ge v_h(\x_i - g) = v_h(\x'_i - g)$ for any $g\in\x'_i$ s.t. $v_{hg} > 0$, since $\x$ is EFX. This shows that $h$ does not EFX-envy $i$ in $\x'$. 

In conclusion, $\x'$ is EFX. Inductively, this shows that the algorithm always maintains EFX. Further, since a good is given to an agent who values it positively, the allocation must be PO throughout the execution of the algorithm. We now show that the envy graph $G_\x$ corresponding to a partial allocation $\x$ is acyclic, ensuring the presence of a source agent (Line 8). Assume for the sake of contradiction, there is a cycle $C$ in $G_\x$. Then reallocating bundles along the cycle strictly improves the utility of every agent in the cycle, and does not change the utility of any agent not in the cycle. Hence, this is a Pareto-improvement over $\x$. This contradicts the fact that $\x$ is PO. Therefore, $G_\x$ cannot have cycles.
\end{proof}

\subsection{EFX+PO for Restricted Mixed Goods and Identical Bads}

We now use the result of the preceding section for computing EFX+PO allocations in the mixed setting comprising of \textit{restricted mixed goods and identical bads}. Recall that for such instances:
\begin{enumerate}
\item for every $j\in M^+$, there exists $v_j\in \Zp$ s.t. for every $i\in N$ with $v_{ij} > 0$, it holds that $v_{ij} = v_j$.
\item for every $j\in M^-$, there exists $v_j\in \Z^{-}$ s.t. for every $i\in N$, it holds that $v_{ij} = v_j$.
\end{enumerate}

\noindent We show that:
\begin{theorem}\label{thm:efxpo}
Given a fair division instance with restricted mixed goods and identical bads, an allocation that is EFX, \propmx, PO and maximizes the social welfare be computed in polynomial-time.
\end{theorem}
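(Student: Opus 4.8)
The plan is to combine the EFX+PO algorithm for restricted pure goods (Theorem~\ref{thm:restrictedgoodsefx}) with a careful allocation of the identical pure bads, and then invoke Lemma~\ref{lem:efx-propm} to get \propmx~for free. Concretely, I would first reduce the mixed-goods part: items in $M^0$ (dummy bads) can be handed to any agent who values them at $0$, contributing nothing to anyone's utility and never creating envy — so WLOG ignore $M^0$. For the mixed goods $M^+$, I would run Algorithm~\ref{alg:envycycle}: since each $j\in M^+$ is valued at $v_j>0$ by the agents in $N_j$ and (possibly negatively) by others, but PO only needs each good to go to an agent achieving $\eta_j$, and $\eta_j=v_j>0$, the same source-in-$G_j$ rule applies. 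This yields an allocation $\y$ of $M^+$ that is EFX and maximizes social welfare over just the goods. The key structural fact carried forward is that $\y$'s envy graph is acyclic (it is PO), and in fact every agent's bundle of goods is a set of ``big-then-small'' items in the sorted order.

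Next I would allocate the identical pure bads $M^-$, where every agent has the same value $v_j<0$ for bad $j$. Sort the bads by $|v_j|$ and assign them one at a time to a \emph{source} of the current envy graph (an agent nobody envies) — this is the bads-analogue of the envy-cycle idea, and it is known to keep the allocation EF1/PropX for identical bads; I need to check it maintains the mixed EFX condition. The crucial observation is that when a bad $c$ with $v_{ic}<0$ is added to a source $i$'s bundle, (a) $i$ only gets ``more envious'' of others, which is fine since condition (ii) of EFX for $i$ is about removing $i$'s own bads — and removing the just-added $c$ restores $i$'s previous (EFX-satisfied) position because we add bads in increasing order of magnitude, so $c$ is the largest-magnitude bad in $\x_i$; (b) for any other agent $h$, since $i$ was a source, $h$ did not envy $i$ before, and adding a bad to $i$ only decreases $v_h(\x_i)$, so $h$ still does not envy $i$ — in particular condition (i) of EFX holds vacuously or trivially for $h$ towards $i$. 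Pairs not involving $i$ are unchanged. So EFX is preserved. PO is preserved because bads are identical, so $\eta_j=v_j$ for all agents and Lemma~\ref{lem:po} still applies after we note the goods were already allocated optimally — one should phrase Lemma~\ref{lem:po} over $M^+\cup M^-$ at once, assigning each good to its $\eta$-maximizer (an agent in $N_j$) and each bad to anyone (all tie at $\eta_j=v_j$).

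The main obstacle I anticipate is the interaction between the goods phase and the bads phase: after the goods are allocated, the envy graph may already be nontrivial, and I must be sure a \emph{source} still exists before each bad is placed — i.e. that the envy graph stays acyclic throughout the bads phase. Acyclicity follows from PO as in Theorem~\ref{thm:restrictedgoodsefx} (a cycle would give a Pareto-improving rotation), but I should double-check that rotating bundles containing both goods and bads along an envy cycle is still a Pareto improvement — it is, by definition of the envy edges, since each agent on the cycle strictly prefers the next bundle and off-cycle agents are untouched. A second, more delicate point: when I add bad $c$ to source $i$, I claimed $i$'s EFX condition towards every $h$ is restored by removing $c$; but $i$ might previously have been satisfied towards $h$ via condition (i) (removing a good from $h$), and adding a bad to $i$ doesn't break that — $v_i(\x_i)$ only drops, but we're comparing $v_i(\x_i-c)=v_i(\x_i^{\text{old}})\ge v_i(\x_h-g)$, which still holds. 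So the two cases of EFX must be tracked per ordered pair, and I should make explicit that each agent's bads are processed largest-first so that the most recently added bad is always the one whose removal ``undoes'' the step. Finally, \propmx~and social-welfare maximality are immediate: EFX $\Rightarrow$ \propmx~by Lemma~\ref{lem:efx-propm}, and the allocation maximizes $\s{SW}$ by Lemma~\ref{lem:po} since every item went to an $\eta$-maximizer. The whole procedure runs in polynomial time: $O(m)$ rounds, each doing envy-graph construction and a source search in $\poly{n,m}$.
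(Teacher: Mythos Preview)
Your overall three-phase plan (allocate $M^+$ via the restricted-goods envy-cycle procedure, then $M^0$ to zero-valuers, then the identical bads, and finally invoke Lemma~\ref{lem:efx-propm}) matches the paper's Algorithm~\ref{alg:restrictedefxpo}. However, your bads phase contains a genuine error: you hand each bad to a \emph{source} (an agent nobody envies), whereas the correct choice is a \emph{sink} (an agent who envies nobody). The duality with goods is exactly reversed: for goods we pick sources so as not to make an already-envied agent more enviable; for bads we must pick sinks so as not to make an already-envious agent worse off. A two-agent counterexample: one good $g$ with $v_g=10$ for both, one identical bad $c$ with $v_c=-1$. After the goods phase $a$ holds $g$, so $b$ envies $a$; $b$ is the unique source and $a$ the unique sink. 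Your rule gives $c$ to $b$, yielding $\x_a=\{g\}$, $\x_b=\{c\}$. Now EFX for $b$ towards $a$ fails: condition~(i) would need $-1\ge v_b(\x_a-g)=0$, and condition~(ii) would need $v_b(\x_b-c)=0\ge v_b(\x_a)=10$. The paper instead gives $c$ to the sink $a$, and then $b$ is EFX towards $a$ via (i) since $0\ge v_b(\x_a-g)=-1$.

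A second, related problem is your ordering: you process bads in \emph{increasing} magnitude so that the just-added bad $c$ is the largest in $\x_i$, and argue that removing $c$ ``restores the previous EFX-satisfied position.'' But EFX condition~(ii) must hold for \emph{every} bad in $\x_i$, so the binding case is removing the \emph{smallest-magnitude} bad, not the largest; and in any case the inequality you write, $v_i(\x_i-c)=v_i(\x_i^{\text{old}})\ge v_i(\x_h-g)$, is neither condition~(i) nor~(ii) of the new allocation. The paper processes bads \emph{largest}-magnitude first and gives each to a sink $i$: then for any bad $j'\in\x_i+j$ one has $|v_{j'}|\ge|v_j|$, hence $v_i(\x_i+j-j')\ge v_i(\x_i)\ge v_i(\x_h)$ for every $h$ (the last inequality because $i$ was a sink), which is precisely condition~(ii). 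With these two fixes (sink instead of source, decreasing instead of increasing magnitude), your outline becomes the paper's proof.
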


We prove Theorem~\ref{thm:efxpo} by showing that our Algorithm~\ref{alg:restrictedefxpo} computes an EFX+PO allocation for the given setting. 

\begin{algorithm}[h]
\caption{EFX+PO for Restricted Mixed Goods \& Id. Bads}\label{alg:restrictedefxpo}
\textbf{Input:} Instance $(N,M,V)$ with restricted mixed goods and identical bads\\
\textbf{Output:} Allocation $\x$
\begin{algorithmic}[1]
\State Partition $M$ into $M^+, M^0, M^-$ \Comment{See Sec.~\ref{sec:prelim}}
\Statex \textit{Phase 1: Allocating $M^+$ \dotfill}
\State Let $V'$ be given by
 \[v'_{ij} = 
    \begin{cases}
    v_{ij}, \text{ if } v_{ij} > 0 \\
    0, \text{ if } v_{ij} \le 0,
    \end{cases}\]
for every $i\in N$ and $j\in M^+$
\State $\x \gets \textsc{Algorithm~\ref{alg:envycycle}}(N,M^+,V')$ 
\Comment{Allocate $M^+$ by running Algorithm~\ref{alg:envycycle} for items in $M^+$ with modified values}
\Statex \textit{Phase 2: Allocating $M^0$ \dotfill}
\While{$M^0 \neq \emptyset$}
\State Pick $j\in M^0$ 
\State Let $i\in N$ be such that $v_{ij} = 0$
\State $\x_i \gets \x_i + j$ \Comment{Assign $j$ to $i$}
\State $M^0 \gets M^0 - j$
\EndWhile
\Statex \textit{Phase 3: Allocating $M^-$ \dotfill}
\State Order bads in $M^-$ according to $\prec$ s.t. $j\prec j'$ iff $-v_j \ge -v_{j'}$
\While{$M^- \neq \emptyset$}
\State Pick smallest $j\in M^-$ according to $\prec$ 
\State Let $G_\x$ be the envy-graph defined by $\x$ \Comment{Def.~\ref{def:envy-graph}}
\State Let $i$ be a \textit{sink} in $G_\x$ \Comment{Def.~\ref{def:envy-graph}}
\State $\x_i \gets \x_i + j$ \Comment{Assign $j$ to $i$}
\State $M^- \gets M^- - j$
\EndWhile
\State \Return $\x$
\end{algorithmic}
\end{algorithm}

We divide the execution of the algorithm into three phases: Phase 1 allocates items in $M^+$ (Lines 2-3), followed by Phase 2 for allocating items in $M^0$ (Lines 4-9), and finally Phase 3 for allocating items in $M^-$ (Lines 10-16). We show that at each iteration of the algorithm, the partial allocation of items allocated so far is always EFX+PO.

We first describe Phase 1. Given the instance $I = (N,M^+,V)$, we consider the instance $I' = (N,M^+,V')$ with values modified by changing the negative values to zero (see Line 2). Then, $I'$ is an instance of goods with restricted additive valuations. 

Using Algorithm~\ref{alg:envycycle}, we obtain an allocation $\x$ in Line 3 which is EFX+PO for the instance $I'$. We show that $\x$ is also EFX for $I$, thus showing that:
\begin{lemma}\label{lem:phase1}
The allocation at the end of Phase 1 is EFX+PO.
\end{lemma}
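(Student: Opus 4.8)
\textbf{Proof proposal for Lemma~\ref{lem:phase1}.}

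The plan is to first handle PO and then EFX separately. For PO, I would invoke Lemma~\ref{lem:po}: in Phase 1, every item $j \in M^+$ is assigned by Algorithm~\ref{alg:envycycle} to an agent $i$ with $v'_{ij} = v_j = \eta_j$ (the maximum value of $j$ over all agents), since $\eta_j = v_j > 0$ by definition of $M^+$. Hence the Phase-1 allocation $\x$ maximizes the social welfare restricted to the items $M^+$, and therefore is PO for the instance $I$ (a Pareto-improvement over $\x$ would strictly increase social welfare, contradicting maximality). The only subtlety is that Lemma~\ref{lem:po} must be applied to the original valuations $v$, not $v'$; but since $v_{ij} = v'_{ij}$ exactly when $v_{ij} = v_j > 0$, and the allocation places each item with such an agent, the social welfare under $v$ equals $\sum_{j \in M^+} \eta_j$, so the argument goes through verbatim.

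For EFX, I would start from the fact (Theorem~\ref{thm:restrictedgoodsefx}) that $\x$ is EFX for the goods instance $I' = (N, M^+, V')$, and upgrade this to EFX for $I$. Fix agents $i, h \in N$. Under $V'$, agent $i$ does not EFX-envy $h$, meaning $v'_i(\x_i) \ge v'_i(\x_h - g)$ for every $g \in \x_h$ with $v'_{ig} > 0$. I need the analogous statement under $v_i$. The key observation is that $v_i$ and $v'_i$ differ only on items that $i$ values non-positively, and such items only \emph{decrease} $i$'s value. Concretely: $v_i(\x_i) \le v'_i(\x_i)$ while $v_i(\x_h) \le v'_i(\x_h)$ as well — so I cannot directly compare. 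The right way is to split $\x_h$ into the goods $i$ values positively (on which $v_i = v'_i$) and the rest: for any $g \in \x_h$ with $v_{ig} > 0$ we have $v_{ig} = v'_{ig} = v_g$, and
\[
v_i(\x_h - g) \le v'_i(\x_h - g) \le v'_i(\x_i),
\]
so if additionally $v_i(\x_i) = v'_i(\x_i)$ we would be done; but in general $v_i(\x_i)$ may be strictly smaller. This is the step that needs care.

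The resolution, which I expect to be the main obstacle, is to argue that in Phase 1 no agent is ever assigned an item she values non-positively, so in fact $v_i(\x_i) = v'_i(\x_i)$ for every $i$ at the end of Phase 1. This holds because Algorithm~\ref{alg:envycycle} assigns each item $j$ only to an agent $i \in N_j$, i.e., with $v'_{ij} = v_j$, which under the restricted-mixed-goods assumption forces $v_{ij} = v_j > 0$. Hence every item in $\x_i$ after Phase 1 is a strictly positive good for $i$, giving $v_i(\x_i) = v'_i(\x_i)$ and also $v'_i(\x_h) = v_i(\x_h) + (\text{sum of }i\text{'s non-positive values on }\x_h) \le v_i(\x_h)$, wait — we need the inequality in the useful direction, namely $v_i(\x_h - g) \le v'_i(\x_h-g)$, which does hold since dropping to $v'$ only raises non-positive entries to $0$. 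Combining: for any $g \in \x_h$ with $v_{ig} > 0$,
\[
v_i(\x_i) = v'_i(\x_i) \ge v'_i(\x_h - g) \ge v_i(\x_h - g),
\]
so $i$ does not EFX-envy $h$ under $v_i$ either. Since $M^- = M^0 = \emptyset$ among allocated items in Phase 1, condition (ii) of EFX is vacuous, so $\x$ is EFX for $I$. I would close by noting that the EFX guarantee here is exactly of the "no positive-good envy" form, which is all Phase 1 needs to hand off to the later phases.
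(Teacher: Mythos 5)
Your proof is correct and follows essentially the same route as the paper's: both establish $v_i(\x_i)=v'_i(\x_i)$ (no agent receives an item she values non-positively), combine this with $v_i(\x_h-g)\le v'_i(\x_h-g)$ and the EFX guarantee for the modified instance $I'$ to get the chain $v_i(\x_i)=v'_i(\x_i)\ge v'_i(\x_h-g)\ge v_i(\x_h-g)$, and obtain PO from Lemma~\ref{lem:po}. The only cosmetic difference is that you justify the key fact $v_i(\x_i)=v'_i(\x_i)$ directly from the structure of Algorithm~\ref{alg:envycycle} (each item is given only to an agent in $N_j$, who values it at $v_j>0$), whereas the paper derives it from Pareto-optimality of $\x$ for $I'$.
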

\begin{proof}
As discussed above, the allocation $\x$ at the end of Phase 1 is EFX+PO for the instance $I' = (N,M^+,V')$ with modified values. We now show that $\x$ is EFX for the original instance $I=(N,M^+,V)$ as well. To this end, observe that since $\x$ is PO for $I'$, no agent $i$ has an item $j$ which she values at 0. Otherwise transferring $j$ to an agent $h$ who values $j$ positively is a Pareto-improvement; such an agent $h$ exists because $j\in M^+$. Thus, $j\in \x_i$ implies $v'_{ij} > 0$. However, by construction this implies $v'_{ij} = v_{ij}$ (see Line 2). Thus, 
\begin{equation}\label{eqn:phase1.1}
v'_i(\x_i) = v_i(\x_i).
\end{equation}

Fix a pair of agents $i,h$. For any $j\notin \x_i$, $v_{ij} \le v'_{ij}$ by construction. Hence:
\begin{equation}\label{eqn:phase1.2}
v_i(\x_h- g) \le v'_i(\x_h-g),
\end{equation}
for any $g\in \x_h$ with $v_{ig} > 0$. Since $\x'$ is EFX~for $I'$, $v'_i(\x_i) \ge v'_i(\x_h-g)$. Together with \eqref{eqn:phase1.1} and \eqref{eqn:phase1.2}, we obtain:
\[v_i(\x_i) \ge v_i(\x_h-g),\]
for any $g\in \x_h$ with $v_{ig} > 0$. This shows that $\x$ is EFX~for $I$ at the end of Phase 1. Since each item $j$ is allocated to an agent $i$ who values it at the highest ($v_{ij}=v_j$), Lemma~\ref{lem:po} shows $\x$ is PO.
\end{proof}

Next, we describe Phase 2 which allocates items in $M^0$. To ensure that the allocation is PO, we must allocate each item $j\in M^0$ to an agent $i$ s.t. $v_{ij} = 0$. Intuitively, since the EFX condition requires that the envy between agents disappear after the removal of an item of \textit{positive} value, items of $M^0$ will not cause any new EFX-envy.

\begin{lemma}\label{lem:phase2}
The allocation at the end of Phase 2 is EFX+PO.
\end{lemma}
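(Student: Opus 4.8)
The plan is to establish the two properties separately, re-using Lemma~\ref{lem:phase1} for the starting allocation and Lemma~\ref{lem:po} for efficiency.

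For Pareto-optimality, it suffices by Lemma~\ref{lem:po} to check that every allocated item $j$ goes to an agent attaining $\eta_j=\max_{i\in N}v_{ij}$. The items of $M^+$ are allocated in Phase~1 by Algorithm~\ref{alg:envycycle}, which, as shown in the proof of Lemma~\ref{lem:phase1}, gives each $j\in M^+$ to an agent with $v_{ij}=v_j$; since all positive values of $j$ equal $v_j$ and the rest are non-positive, $v_j=\eta_j$. For $j\in M^0$ we have $v_{ij}\le 0$ for all $i$ and $v_{ij}=0$ for some $i$, so $\eta_j=0$, and Phase~2 hands $j$ to an agent with $v_{ij}=0=\eta_j$. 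Hence every item is allocated at its maximum value, and Lemma~\ref{lem:po} yields PO together with maximum social welfare.

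For EFX, the key structural fact is that at no point during Phases~1 and~2 does any agent hold a negatively-valued item: Phase~1 gives each $M^+$ item to an agent valuing it at $v_j>0$, and Phase~2 gives each $M^0$ item to an agent valuing it at $0$. Consequently condition~(ii) in Definition~\ref{def:efx} is vacuous, so the substantive thing to track is the ``goods'' inequality $v_h(\x_h)\ge v_h(\x_i-g)$ for all $h\neq i$ and all $g\in\x_i$ with $v_{hg}>0$. By Lemma~\ref{lem:phase1} this holds at the end of Phase~1 (its proof establishes exactly this bound). I would then show by induction over the assignments of Phase~2 that it is preserved: when $j\in M^0$ is assigned to agent $i$ (so $v_{ij}=0$), only $\x_i$ changes, to $\x_i+j$, and $v_i(\x_i+j)=v_i(\x_i)$, so every constraint in which $i$ is the envying agent is unchanged. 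For any other agent $h$, since $j\in M^0$ we have $v_{hj}\le 0$, so $g=j$ never witnesses a violation, and for $g\in\x_i$ with $v_{hg}>0$ we get $v_h((\x_i+j)-g)=v_h(\x_i-g)+v_{hj}\le v_h(\x_i-g)\le v_h(\x_h)$, the last step being the inductive hypothesis. Hence the invariant persists, the allocation remains EFX, and \propmx~follows via Lemma~\ref{lem:efx-propm}.

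The only delicate point, and the one I would be careful to get right, is the choice of induction hypothesis: ``EFX'' itself is uninformative during Phases~1 and~2 because no bundle contains a bad, so one must carry the stronger per-pair inequality $v_h(\x_h)\ge v_h(\x_i-g)$ forward, since it is precisely this form that Phase~3 will rely on once bads are introduced. Everything else reduces to the one-line monotonicity estimate above, using $v_{hj}\le 0$ for $j\in M^0$.
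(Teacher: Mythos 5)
Your proof is correct and follows essentially the same route as the paper: PO via Lemma~\ref{lem:po} since every item in $M^+\cup M^0$ goes to an agent attaining its maximum value, and EFX via the one-line monotonicity estimate $v_h(\x_i'+\,j-g)\le v_h(\x_i-g)\le v_h(\x_h)$ using $v_{hj}\le 0$. Your extra observation --- that during Phases 1--2 the disjunct (ii) of EFX is vacuous and one should explicitly carry the per-pair goods inequality as the induction invariant --- is a slightly more careful bookkeeping of what the paper does implicitly, but it is the same argument.
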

\begin{proof}
Suppose $\x$ is a partial EFX allocation prior to allocating an item $j\in M^0$. Suppose we allocate $j$ to an agent $i$ with $v_{ij} = 0$, to obtain an allocation $\x'$. Since $\x_h = \x'_h$ for every $h\neq i$, any possible violation of the EFX condition in $\x'$ must involve agent $i$. First, since $i$ does not get a bad, the utility of $i$ does not change, and neither does the utility of any other agent. Hence $i$ will not EFX-envy any other agent in $\x'$. Now since $\x$ is EFX, we have for any other $h\neq i$, $v_h(\x'_h) = v_h(\x_h) \ge v_h(\x_i - g) \ge v_h(\x'_i - g)$, for any $g\in \x'_i$ with $v_{hg}>0$. Hence $\x'$ is EFX.

Finally, Lemma~\ref{lem:po} once again shows that the allocation at the end of Phase 2 is PO, since any $j\in M^+ \cup M^0$ is allocated to an agent $i$ who values it at the highest-possible value.
\end{proof}

We finally describe Phase 3 which allocates items of $M^-$. Since the bads form an identical instance, i.e. a given bad $j$ has the same value $v_j < 0$ for all agents, we first sort the items in non-increasing order of disutility, i.e. we order the bads using $\prec$ where $j\prec j'$ iff $-v_{ij} \ge -v_{ij'}$. 

Let $\x$ be a partial EFX+PO allocation items in $M^+, M^0$ and some items of $M^-$. We must now decide how to allocate the next bad $j\in M^-$ in the order defined by $\prec$. For this, we consider the envy-graph $G=(N,E)$ (see Def.~\ref{def:envy-graph}). We argue that the way we allocate items ensures that $G$ will have at least one \textit{sink} agent, i.e., an agent who does not envy any agent. Then, we give $j$ to a sink $i$ (Line 14). Intuitively, this is the right choice, as $i$ is already well-off in terms of value as $i$ does not envy any other agent. Hence we should give the bad to $i$ instead of any other agent who has lesser value. We now formally show that:
\begin{lemma}\label{lem:phase3}
The allocation at the end of Phase 3 is EFX+PO.
\end{lemma}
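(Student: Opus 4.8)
The plan is to prove Lemma~\ref{lem:phase3} by induction on the number of pure bads of $M^-$ allocated during Phase~3, taking as base case the allocation at the end of Phase~2, which is EFX+PO by Lemma~\ref{lem:phase2}. For the inductive step, assume $\x$ is EFX+PO immediately before a bad $j\in M^-$ is picked in Line~11, and let $\x'$ be obtained from $\x$ by adding $j$ to the sink $i$ chosen in Line~13. I must establish three things: that a sink exists (so Line~13 is well-defined), that $\x'$ is PO, and that $\x'$ is EFX. For the first, since $\x$ is PO its envy-graph $G_\x$ is acyclic — a directed cycle would permit envy-cycle elimination (Def.~\ref{def:envy-graph}), a Pareto-improvement, contradicting PO — and a finite acyclic digraph has a sink. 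For the second, note that in Phases~1--2 every item of $M^+\cup M^0$ was given to an agent valuing it at $\eta_j$ (an agent of $N_j$ for an $M^+$-item, an agent with value $0$ for an $M^0$-item), and since the pure bads are identical we have $\eta_j=v_j$ for every $j\in M^-$, so $j$ is likewise given to an agent achieving $\eta_j$; hence Lemma~\ref{lem:po} shows $\x'$ is PO and maximizes social welfare.

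The heart of the argument is preserving EFX. Only $\x_i$ changed, so any EFX-violation in $\x'$ involves $i$. First, I claim $i$ does not EFX-envy anyone. Because $\x$ is PO, $\x_i$ contains no negatively-valued $M^+$- or $M^0$-item (transferring such an item to an interested agent, which exists for items of $M^+$, would Pareto-improve; $M^0$-items went to agents valuing them at $0$), so every bad of $\x'_i$ lies in $M^-$ and was allocated during Phase~3. Since Phase~3 processes bads in non-increasing order of disutility, $j$ has the smallest disutility among the bads of $\x'_i$; therefore for every bad $c\in\x'_i$ and every $h\neq i$, $v_i(\x'_i-c)\ge v_i(\x'_i-j)=v_i(\x_i)\ge v_i(\x_h)=v_i(\x'_h)$, where the last inequality uses that $i$ was a sink in $G_\x$. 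This is exactly condition~(ii) of the EFX definition for $i$ against $h$. Second, no $h\neq i$ EFX-envies $i$ in $\x'$: as $j\in M^-$ is a bad for $h$ too, $v_h(\x'_i)=v_h(\x_i)+v_j<v_h(\x_i)$ and the positively-valued items of $\x'_i$ (with respect to $h$) coincide with those of $\x_i$, so whichever of conditions~(i),(ii) certified that $h$ did not EFX-envy $i$ under $\x$ still certifies it under $\x'$, since the relevant right-hand sides $v_h(\x'_i-g)$ or $v_h(\x'_i)$ only decreased. Pairs not involving $i$ are untouched. Hence $\x'$ is EFX, which closes the induction; combined with Lemma~\ref{lem:efx-propm} this also yields \propmx, and with Lemma~\ref{lem:po} the social-welfare claim of Theorem~\ref{thm:efxpo}.

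I expect the crux to be the verification of condition~(ii) for the sink $i$ after it receives $j$. The argument works only because Phase~3 makes two coordinated choices: it hands $j$ to a sink, and it processes bads from largest to smallest disutility, so that deleting the freshly-added bad $j$ from $\x'_i$ returns $i$ to the bundle $\x_i$ which, as a sink, it weakly preferred to every other bundle. If either design choice were dropped the EFX invariant could fail, so the care in the proof concentrates on this point; by contrast the PO part is essentially free here, thanks to the identical-bads assumption ($\eta_j = v_j$) and Lemma~\ref{lem:po}.
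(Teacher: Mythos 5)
Your proof is correct and follows essentially the same route as the paper's: induction over the bads of $M^-$, existence of a sink via acyclicity of the envy-graph (itself a consequence of PO), condition~(ii) for the sink $i$ using the non-increasing disutility order together with $v_i(\x_i)\ge v_i(\x_h)$, and monotonicity ($v_{hj}\le 0$) to show no other agent starts to EFX-envy $i$. Your explicit observation that PO forces every negatively-valued item in $\x_i$ to come from $M^-$ (so the ordering argument applies to all bads of $\x'_i$) is a detail the paper leaves implicit, but the argument is otherwise identical.
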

\begin{proof}
Suppose $\x$ is a partial EFX allocation prior to allocating an item $j\in M^-$. Suppose we allocate $j$ to a sink agent $i$ in the envy-graph $G_\x$ to obtain an allocation $\x'$. We will show later that $G$ is acyclic, ensuring such a sink-agent exists. For sake of contradiction assume $\x'$ is not EFX. Since $\x_h = \x'_h$ for every $h\neq i$, any possible violation of the EFX condition must involve agent $i$. We claim that $i$ is EFX towards any other agent $h\neq i$ in $\x'$. To see this, note that for any bad $j'\in \x'_i$, $v_i(\x'_i - j') = v_i(\x_i) + v_{ij} - v_{ij'} \ge v_i(\x_i)$ since $j'<j$. Now since $i$ is a sink, $v_i(\x_i) \ge v_i(\x_h)$, thus implying that $v_i(\x'_i - j') \ge v_i(\x_h)$ for every bad $j'\in \x'_i$ and any agent $h\neq i$, showing that $i$ does not EFX-envy any other agent. 

Now suppose some agent $h\neq i$ EFX-envies $i$ in $\x'$. Since $\x$ was EFX, either (1) $v_h(\x_h) \ge v_h(\x_i- g)$ for every $g\in\x_i$ with $v_{hg} > 0$, or (2) $v_h(\x_h - j') \ge v_h(\x_i)$ for every $j'\in \x_h$ with $v_{hj'} < 0$. If Condition (1) holds, then we also obtain $v_h(\x'_h) = v_h(\x_h) \ge v_h(\x'_i- g)$ for every $g\in\x_i$ with $v_{hg} > 0$, since $\x'_h = \x_h$ and $\x'_i = \x_i + j$ but $v_{hj} \le 0$. Thus $h$ continues to be \efx~towards $i$ in $\x'$. If Condition (2) holds, then, using $v_{hj}\le 0$ we obtain $v_h(\x'_h - j') = v_h(\x_h - j') \ge v_h(\x_i) \ge v_h(\x'_i)$ for every $j'\in \x'_h$ with $v_{hj'} < 0$, once again showing that $h$ does not EFX-envy $i$ in $\x'$.

Thus $\x'$ is EFX. Further, note that any item is allocated to an agent who values it at the highest. By Lemma~\ref{lem:po}, the allocation remains PO throughout the execution of the algorithm. As argued before, if the envy-graph $G_\x$ corresponding to some partial allocation $\x$ has a cycle, then reallocating along the cycle gives a Pareto-improvement. Since $\x$ is always PO, this means that $G_\x$ is always acyclic, ensuring the presence of sink agents (Line 19). Thus, the allocation at the end of Phase 3 is EFX+PO.
\end{proof}

In conclusion, Algorithm~\ref{alg:restrictedefxpo} returns and EFX+PO allocation for the restricted mixed goods and identical bads setting. Note that any item is allocated to an agent who values it at the highest, hence the allocation also maximizes the social welfare by Lemma~\ref{lem:po}. Finally, Lemma~\ref{lem:efx-propm} implies that any EFX allocation is \propmx. This proves Theorem~\ref{thm:efxpo}.

\subsection{\efxzero+PO for Binary Mixed Goods and Identical Bads}\label{sec:binary}

Next, we show that for the special case of instances with binary mixed goods and identical bads, an \efxzero+PO allocation can be computed in polynomial time. 

\begin{theorem}\label{thm:efxzeropo}
Given a fair division instance with binary mixed goods and identical bads, an allocation that is \efxzero, \propmxzero, PO and maximizes the social welfare be computed in polynomial-time.
\end{theorem}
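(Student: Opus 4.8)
\textbf{Proof proposal for Theorem~\ref{thm:efxzeropo}.}

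The plan is to adapt Algorithm~\ref{alg:restrictedefxpo} to the binary mixed goods setting, where all $v_j$'s for $j \in M^+$ equal a common value $v > 0$, and to strengthen the analysis from EFX to \efxzero. The key structural observation is that for binary mixed goods, every agent either values a mixed good $j$ at exactly $v$ or at some non-positive value, and the positive values are all identical. This means that the EnvyCycle-type procedure of Algorithm~\ref{alg:envycycle}, when run on $M^+$ with the modified valuations $v'_{ij} \in \{0, v\}$, actually produces an allocation in which bundles differ in size by at most one good (among agents who are in contention), because any agent who would EFX$_0$-envy another after removal of a good must be holding strictly fewer positively-valued goods. So first I would re-run the three-phase structure: Phase 1 allocates $M^+$ via (a suitably modified) Algorithm~\ref{alg:envycycle}, Phase 2 allocates $M^0$ to agents valuing those items at $0$, and Phase 3 allocates $M^-$ to sink agents in non-increasing order of disutility.

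The main work is re-examining each phase to upgrade EFX to \efxzero. For Phase 1, the crucial point is that when a good $j$ is assigned to a source $i$ of $G_j = G_\x[N_j]$, I need that no $h$ (in $N_j$ or not) \efxzero-envies $i$ afterward. For $h \in N_j$: since $i$ is a source in $G_j$, $h$ does not envy $i$, so $v_h(\x_h) \geq v_h(\x_i)$; because all positive values of mixed goods equal $v$, removing any positively-valued (indeed any value-$v$) good $g$ from $\x'_i = \x_i + j$ gives $v_h(\x'_i - g) = v_h(\x_i) + v - v = v_h(\x_i) \leq v_h(\x_h)$, so the \efxzero~condition (i) holds with respect to value-$v$ goods; and goods valued at $0$ by $h$ are handled since removing them also cannot push $v_h(\x'_i - g)$ above $v_h(\x_i)$. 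Wait — for \efxzero, condition (i) must hold for \emph{all} $g \in \x'_i$ with $v_{hg} \geq 0$, including $g$ with $v_{hg} = 0$; but by PO of the Phase-1 allocation on $I'$, no agent holds a good she values at $0$, so actually $\x'_i$ contains only goods that $h$ either values at $v$ or values negatively — there are no zero-valued goods in anyone's bundle after Phase 1. This resolves the zero-value subtlety cleanly. For $h \notin N_j$, we have $v_{hj} \leq 0$, so $\x'_i$ still contains no good $h$ values at exactly $0$ unless $v_{hj} = 0$; but then $v_h(\x'_i - j) = v_h(\x_i)$ and we reduce to the previous case, while $v_h(\x'_i) \le v_h(\x_i)$ handles $v_{hj}<0$.

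For Phase 2, allocating $j \in M^0$ to an agent $i$ with $v_{ij} = 0$: now $\x'_i$ may contain the zero-valued item $j$. I must check that no $h$ \efxzero-envies $i$. The agent $i$'s utility is unchanged, so $i$ does not envy anyone new; and for $h \neq i$, if $v_{hj} < 0$ then $v_h(\x'_i) < v_h(\x_i) \le v_h(\x_h)$ (using that $\x$ was \efxzero, hence envy-free in the relevant direction) — actually I need to be careful: $\x$ being \efxzero~does not give envy-freeness. The right argument mirrors Lemma~\ref{lem:phase2}: since $\x$ is \efxzero, for each $h$ either condition (i) or (ii) held; if (ii) held for $h$ w.r.t.\ $i$ it still holds as $v_h(\x'_i) \le v_h(\x_i)$ when $v_{hj}\le 0$; if (i) held, then for $g = j$ with $v_{hj} = 0$ we have $v_h(\x'_i - j) = v_h(\x_i) \le v_h(\x_h)$ by the old condition (i) applied with a previously-held good, and for other $g \in \x'_i$ with $v_{hg} \ge 0$ the old condition (i) directly applies since $v_h(\x'_i - g) = v_h(\x_i - g) + v_{hj} \le v_h(\x_i - g) \le v_h(\x_h)$. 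So Phase 2 preserves \efxzero. Phase 3 is essentially identical to Lemma~\ref{lem:phase3}: bads are valued negatively by everyone and the \efxzero~condition (i) is a statement about positive/nonnegative goods, so adding a bad to a sink only needs condition (ii) to be checked for $i$, which holds because bads are processed in non-increasing disutility order, and the analysis of why $h \ne i$ stays \efxzero~is verbatim the same (using $v_{hj} < 0$). Finally, PO throughout follows from Lemma~\ref{lem:po} as every item goes to an agent valuing it maximally, acyclicity of the envy graph follows from PO (envy-cycle elimination would be a Pareto improvement), and \propmxzero~follows from \efxzero~by Lemma~\ref{lem:efx-propm}.

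The main obstacle I expect is the careful bookkeeping around zero-valued items in the \efxzero~conditions — in particular ensuring that after Phase 1 no agent holds a good she values at zero (which hinges on PO of Algorithm~\ref{alg:envycycle}'s output on $I'$ and on $M^+$ items being positively valued by \emph{someone}), and then verifying that the single zero-valued items introduced in Phase 2 do not create an \efxzero-violation even though they \emph{can} be the item whose removal is required. The binary structure (all positive mixed-good values equal) is what makes the Phase-1 source argument go through for \efxzero~rather than merely EFX, because it guarantees that removing any positively-valued good from a bundle decreases its value by exactly the same amount regardless of which good is removed.
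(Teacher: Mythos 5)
There is a genuine gap, and it sits exactly where the paper warns it would. Your Phase 1 keeps the EnvyCycle-based Algorithm~\ref{alg:envycycle} and tries to dismiss the zero-value issue by arguing that ``by PO of the Phase-1 allocation on $I'$, no agent holds a good she values at $0$, so $\x'_i$ contains only goods that $h$ either values at $v$ or values negatively.'' This conflates whose valuation matters: PO only guarantees that the \emph{holder} $i$ values every good in $\x_i$ at $v$; the \efxzero~condition for $h$ versus $i$ quantifies over goods $g \in \x_i$ with $v_{hg} \ge 0$, and nothing prevents $v_{hg} = 0$ for a good that $i$ values at $v$. Concretely, with two agents and $M^+=\{g_1,g_2\}$, $v_{1g_1}=v_{1g_2}=v_{2g_2}=v$ and $v_{2g_1}=0$: Algorithm~\ref{alg:envycycle} may give $g_1$ to agent $1$ (forced) and then $g_2$ to agent $1$ as well (both agents are sources of $G_{g_2}$ at that point), yielding $\x_1=\{g_1,g_2\}$, $\x_2=\emptyset$, which is EFX but not \efxzero~since $v_2(\x_1-g_1)=v>0=v_2(\x_2)$. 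Your case ``$h\notin N_j$, $v_{hj}=0$, reduce to the previous case'' fails for the same reason: such an $h$ is not in $G_j$, so the source property gives you no bound on $v_h(\x_i)$, and the inductive hypothesis (\efxzero~of $\x$) does not imply $v_h(\x_h)\ge v_h(\x_i)$ unless $\x_i$ already contained a good $h$ values at exactly $0$. Your Phase 2 argument has the same hole: you invoke ``old condition (i) applied with a previously-held good'' to get $v_h(\x_h)\ge v_h(\x_i)$, but if every $g\in\x_i$ with $v_{hg}\ge 0$ in fact has $v_{hg}=v$, condition (i) only yields $v_h(\x_h)\ge v_h(\x_i)-v$, and adding a $j$ with $v_{hj}=0$ to $\x_i$ then creates an \efxzero-violation.

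The paper's Algorithm~\ref{alg:binaryefxpo} avoids both problems structurally rather than by analysis: Phase 1 replaces the EnvyCycle procedure with the Nash-welfare-maximizing procedure \textsc{BinaryGoods} of Amanatidis et al., whose output is already \efxzero+PO for the binary goods instance $I'$ (Lemma~\ref{lem:phase1-bin} then transfers this to $I$); and Phase 2 allocates each $j\in M^0$ not to an arbitrary agent with $v_{ij}=0$ but to a \emph{source} of the envy subgraph induced on $N_j=\{i : v_{ij}=0\}$, which guarantees that every agent who envies the recipient values $j$ strictly negatively, so $j$ never appears among the removable items in that agent's \efxzero~check (Lemma~\ref{lem:phase2-bin}). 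Your Phase 3 and the PO/\propmxzero~conclusions are fine, but Phases 1 and 2 as you propose them do not establish \efxzero.
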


Recall that in this setting, there exists an $a>0$ such that for all $j\in M^+$, if $v_{ij} > 0$, then $v_{ij} = a$. In the restricted mixed goods setting, an \efxzero+PO allocation need not even exist (Appendix~\ref{app:counter}). Further, even for the binary mixed goods case, Algorithm~\ref{alg:restrictedefxpo} cannot provide \efxzero~guarantee because it may assign a mixed good $j$ to an agent $i$, and there could be another agent $h$ that envies $i$ but $v_{hj} = 0$, thus causing $h$ to \efxzero-envy $i$.

\begin{algorithm}[t]
\caption{\efxzero+PO for Binary Mixed Goods \& Id. Bads}\label{alg:binaryefxpo}
\textbf{Input:} Instance $(N,M,V)$ with binary mixed goods and identical bads\\
\textbf{Output:} Allocation $\x$
\begin{algorithmic}[1]
\State Partition $M$ into $M^+, M^0, M^-$ \Comment{See Sec.\ref{sec:prelim}}
\Statex \textit{Phase 1: Allocating $M^+$ \dotfill}
\State Let $V'$ be given by
\[v'_{ij} = 
    \begin{cases}
    v_{ij}, \text{ if } v_{ij} > 0 \\
    0, \text{ if } v_{ij} \le 0,
    \end{cases}\]
for every $i\in N$ and $j\in M^+$
\State $\x \gets \textsc{BinaryGoods}(N,M^+,V')$ \Comment{Allocate $M^+$ via \textsc{BinaryGoods} for items in $M^+$ with modified values}
\Statex \textit{Phase 2: Allocating $M^0$ \dotfill}
\While{$M^0 \neq \emptyset$}
\State Pick $j\in M^0$ 
\State Let $N_j \coloneqq \set{i \in N \midd v_{ij} = 0}$ 
\State Let $G_\x$ be the envy-graph defined by $\x$ \Comment{Def.~\ref{def:envy-graph}}
\State Let $G_j = G_\x[N_j]$ be the sub-graph of $G_\x$ induced by $N_j$
\State Let $i\in N_j$ be a \textit{source} in $G_j$ \Comment{Def.~\ref{def:envy-graph}} 
\State $\x_i \gets \x_i + j$ \Comment{Assign $j$ to $i$}
        $M^0 \gets M^0 - j$
\EndWhile
\Statex \textit{Phase 3: Allocating $M^-$ \dotfill}
\State Order bads in $M^-$ according to $\prec$ s.t. $j\prec j'$ iff $-v_j \ge -v_{j'}$
\While{$M^- \neq \emptyset$}
\State Pick smallest $j\in M^-$ according to $\prec$ 
\State Let $G_\x$ be the envy-graph defined by $\x$ \Comment{Def.~\ref{def:envy-graph}}
\State Let $i$ be a sink in $G$ \Comment{Def.~\ref{def:envy-graph}}
\State $\x_i \gets \x_i + j$ \Comment{Assign $j$ to $i$}
$M^- \gets M^- - j$
\EndWhile
\State \Return $\x$
\end{algorithmic}
\end{algorithm}

We circumvent this issue by using a different approach than Algorithm~\ref{alg:restrictedefxpo} for allocating $M^+$ in Phase $1$ and $M^0$ in Phase $2$. Specifically, we modify the values of items in $M^+$ as before to convert it into a binary goods instance $I'$, and then use algorithm \textsc{Alg-Binary} of \cite{amanatidis2020mnwefx}. This algorithm is adapted from \cite{darmann2014binary, barman2018binarynsw}, and computes a Nash-welfare maximizing allocation of $I'$ which is also \efxzero+PO (\cite{amanatidis2020mnwefx}, Theorem 3.1). Afterwards, we adapt Phase $2$ of Algorithm~\ref{alg:restrictedefxpo}, paying extra care to which agents we allocate the items in $M^0$. Finally, Phase $3$ of
Algorithm~\ref{alg:restrictedefxpo} remains the same. 

We prove Theorem~\ref{thm:efxzeropo} by showing that Algorithm~\ref{alg:binaryefxpo} computes an \efxzero+PO for the given setting. We first describe Phase 1 of Algorithm~\ref{alg:binaryefxpo}. Given the instance $I = (N,M^+,V)$, we consider the instance $I' = (N,M^+,V')$ with modified values,
as in the restricted setting. Notice that $I'$ is an instance with binary values. For such instances, an allocation 
which maximizes the Nash welfare can be computed in polynomial-time~\cite{darmann2014binary,barman2018binarynsw}, and further a
Nash welfare maximizing allocation which is also \efxzero+PO can be computed in polynomial-time~\cite{amanatidis2020mnwefx} via the procedure \textsc{Alg-Binary} of \cite{amanatidis2020mnwefx} (which we rename here to \textsc{BinaryGoods} to 
avoid confusion). Using this, we obtain an
allocation $\x$ in Line 3 which is \efxzero+PO for the instance $I'$. Like in Lemma~\ref{lem:phase1}, we show that $\x$ is also \efxzero~for $I$.

\begin{lemma}\label{lem:phase1-bin}
The allocation at the end of Phase 1 is \efxzero+PO.
\end{lemma}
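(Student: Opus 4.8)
The plan is to mirror the argument of Lemma~\ref{lem:phase1} almost verbatim, since the only difference between Algorithm~\ref{alg:restrictedefxpo} and Algorithm~\ref{alg:binaryefxpo} in Phase~1 is that the goods-only subroutine now produces an \efxzero+PO allocation rather than merely \efx+PO. So first I would invoke the guarantee of \textsc{BinaryGoods} (Theorem~3.1 of~\cite{amanatidis2020mnwefx}): the allocation $\x$ returned in Line~3 is \efxzero+PO and Nash-welfare maximizing for the binary-goods instance $I' = (N, M^+, V')$. The goal is to upgrade this to \efxzero~for the original instance $I = (N, M^+, V)$.

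The key observation, exactly as before, is that since $\x$ is PO for $I'$, no agent $i$ holds an item $j$ with $v'_{ij} = 0$: otherwise, because $j \in M^+$, some agent $h$ has $v'_{hj} = v_{hj} = a > 0$, and transferring $j$ from $i$ to $h$ is a Pareto-improvement, contradicting PO. Hence $j \in \x_i$ implies $v'_{ij} = v_{ij} > 0$, so $v'_i(\x_i) = v_i(\x_i)$ for every $i$. Next, for any agent $i$ and any $j \notin \x_i$ we have $v_{ij} \le v'_{ij}$ by the definition of $V'$ in Line~2, so for any $h \ne i$ and any $g \in \x_h$ with $v_{ig} \ge 0$ we get $v_i(\x_h - g) \le v'_i(\x_h - g)$. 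Since $\x$ is \efxzero~for $I'$, for every pair $i, h$ it holds that $v'_i(\x_i) \ge v'_i(\x_h - g)$ for all $g \in \x_h$ with $v'_{ig} \ge 0$; note that $\{g \in \x_h : v_{ig} \ge 0\} \subseteq \{g \in \x_h : v'_{ig} \ge 0\}$ since $v'_{ig} \ge v_{ig}$, so the \efxzero~inequality for $I'$ covers every good relevant for the \efxzero~condition of $I$. Chaining the three facts yields $v_i(\x_i) \ge v_i(\x_h - g)$ for all $g \in \x_h$ with $v_{ig} \ge 0$, i.e.\ no agent \efxzero-envies another in $I$. Finally, since every item in $M^+$ is allocated to an agent who values it at $v_j = a$, which is the maximum possible value $\eta_j$, Lemma~\ref{lem:po} shows $\x$ is PO (indeed social-welfare maximizing), completing the proof.

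There is no real obstacle here; the argument is a routine adaptation of Lemma~\ref{lem:phase1}, with the single extra bookkeeping point being to check that moving from the ``$v_{ig} > 0$'' quantifier (EFX) to the ``$v_{ig} \ge 0$'' quantifier (\efxzero) is harmless --- this works precisely because $v'_{ig} \ge v_{ig}$, so any good that is non-negatively valued under $v_i$ is also non-negatively valued under $v'_i$, and the \efxzero~guarantee on $I'$ therefore applies to it. The only thing to flag is that this relies on \textsc{BinaryGoods} actually outputting an \efxzero~(not just \efx) allocation for binary goods, which is what the cited result provides; had we only been able to guarantee \efx~for $I'$, the zero-valued items held by an agent could break \efxzero, but PO rules those out.
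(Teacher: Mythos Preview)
Your proposal is correct and follows essentially the same argument as the paper's own proof of Lemma~\ref{lem:phase1-bin}: both use PO of $\x$ in $I'$ to conclude that every agent holds only positively-valued items, deduce $v_i(\x_i)=v'_i(\x_i)$ and $v_i(\x_h-g)\le v'_i(\x_h-g)$, and chain these with the \efxzero~guarantee on $I'$. Your explicit remark that $\{g\in\x_h:v_{ig}\ge 0\}\subseteq\{g\in\x_h:v'_{ig}\ge 0\}$ is a welcome piece of bookkeeping that the paper leaves implicit (in $I'$ all values are non-negative, so the \efxzero~condition there covers every $g\in\x_h$ anyway).
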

\begin{proof}
As discussed above, we know that the allocation $\x$ at the end of Phase 1 is \efxzero+PO for the 
instance $I' = (N,M^+,V')$ with modified values. We now show that $\x$ is \efxzero~for the original instance $I=(N,M^+,V)$ as well. 
To this end, observe that, since $\x$ is PO for $I'$, no agent $i$ has an item $j$ which she values at 0. Otherwise transferring $j$ 
to an agent $h$ who values $j$ positively is a Pareto-improvement; such an agent $h$ exists because $j\in M^+$. Thus, $j\in \x_i$ 
implies $v'_{ij} > 0$. However, by construction this implies $v_{ij} > 0$ (see Line 2). Hence, for any agent $i$ and item $j$, if 
$j\in x_i$, then $v_{ij} > 0$, i.e., every agent gets items that are goods for them. Thus,
\begin{equation}\label{eqn:phase1.1-bin}
v'_i(\x_i) = v_i(\x_i).
\end{equation}
Fix a pair of agents $i, h \in N$. For any $j\notin \x_i$, $v_{ij} \le v'_{ij}$ by construction. Hence:
\begin{equation}\label{eqn:phase1.2-bin}
v_i(\x_h - g) \le v'_i(\x_h - g),
\end{equation}
for any $g \in \x_h$. Since $\x'$ is \efxzero~for $I'$, $v'_i(\x_i) \ge v'_i(\x_h - g)$. Together with 
\eqref{eqn:phase1.1-bin} and \eqref{eqn:phase1.2-bin}, we obtain:
\[
v_i(\x_i) \ge v_i(\x_h - g),
\]
for any $g\in \x_h$. This shows that $\x$ is \efxzero~for $I$ at the end of Phase 1. Further each item $j$ is allocated to some agent
$i$ who values it at the highest-possible value, thus showing that $\x$ is PO.
\end{proof}

Next, we describe Phase 2 which allocates items in $M^0$. The only difference between Algorithms~\ref{alg:restrictedefxpo} and~\ref{alg:binaryefxpo} in this Phase is that, while in the former we allocated every item $j \in M^0$ to an arbitrary agent which
has value $0$ for $j$, in the latter, out of all agents $i$ which have $v_{ij} = 0$, we allocate $j$ to a source of the
induced subgraph of the envy-graph. This ensures that no other agent $h$ who envies $i$ has $v_{hj} = 0$, which ensures the \efxzero~condition is not violated for $h$. We prove:
\begin{lemma}\label{lem:phase2-bin}
The allocation at the end of Phase 2 is \efxzero+PO.
\end{lemma}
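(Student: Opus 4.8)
The plan is to mirror the structure of Lemma~\ref{lem:phase2}, inducting over the items of $M^0$ allocated so far, but now exploiting the source‑selection rule to defeat the extra $v_{ig}\ge 0$ requirement of \efxzero. Suppose $\x$ is a partial \efxzero{} allocation just before allocating an item $j\in M^0$, and suppose we give $j$ to an agent $i$ that is a source in $G_j = G_\x[N_j]$, where $N_j = \set{i\in N : v_{ij}=0}$, obtaining $\x'$. As in the earlier phase, $v_{ij}=0$ means $i$'s utility is unchanged and no other agent's utility changes, so $i$ cannot \efxzero‑envy anyone in $\x'$, and any violation must be some $h\neq i$ that \efxzero‑envies $i$. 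The key observation is that adding $j$ to $\x_i$ only matters to $h$ if $v_{hj}\ge 0$: if $v_{hj}<0$, then $v_h(\x'_i - g) \le v_h(\x_i - g)$ for every relevant $g$, so the \efxzero{} condition that held in $\x$ is preserved.

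So assume $v_{hj}\ge 0$. Since $j\in M^0$, no agent values $j$ strictly positively, hence $v_{hj}=0$, i.e.\ $h\in N_j$. Now split on whether $h$ envied $i$ in $\x$. If $h\notin$ (the set of agents envying $i$ in $\x$), then $v_h(\x_h)\ge v_h(\x_i)$; since $v_{hj}=0$ we get $v_h(\x_h) = v_h(\x'_h) \ge v_h(\x_i) = v_h(\x'_i) \ge v_h(\x'_i - g)$ for every $g\in\x'_i$ with $v_{hg}\ge 0$ (the last step since removing a nonnegatively‑valued good only decreases the bundle's value), so $h$ does not \efxzero‑envy $i$. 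If $h$ did envy $i$ in $\x$, then because $h\in N_j$ the agent $h$ is a node of $G_j$, and since $i$ is a source of $G_j$ there is no edge $h\to i$ in $G_j$ — contradicting that $h$ envies $i$. Hence this case is vacuous. Either way $\x'$ is \efxzero. One must also check the source exists: $G_j$ is an induced subgraph of $G_\x$, which is acyclic throughout (any envy cycle would yield a Pareto improvement, contradicting PO, which holds by Lemma~\ref{lem:po} since every item in $M^+\cup M^0$ goes to an agent valuing it at the highest possible value), so $G_j$ is acyclic and has a source.

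Finally, PO is maintained exactly as before: every item of $M^0$ is assigned to an agent with $v_{ij}=0=\eta_j$ (no one values a dummy bad positively), so by Lemma~\ref{lem:po} the running allocation maximizes social welfare and is PO. Combining with Lemma~\ref{lem:phase1-bin}, the allocation at the end of Phase~2 is \efxzero+PO.

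The main obstacle is the middle step: arguing that the only dangerous envier $h$ is one with $v_{hj}=0$, and then showing the source rule makes that case impossible. The subtlety compared to Lemma~\ref{lem:phase2} is precisely that \efxzero{} forces us to worry about zero‑valued goods in $i$'s bundle — it is not enough, as it was for plain \efx, that $j$ contributes $0$; we genuinely need that no agent who already envies $i$ has $v_{hj}=0$, which is exactly what the source selection in $G_j$ buys us.
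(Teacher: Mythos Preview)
Your proof is correct and takes essentially the same approach as the paper: induct over the items of $M^0$, use that $i$'s utility is unchanged so $i$ cannot \efxzero-envy anyone, and for any $h\neq i$ use the source rule to force $v_{hj}<0$ whenever $h$ envies $i$, so that the new item $j$ never enters the set of goods over which the \efxzero{} condition quantifies. Your case split (first on the sign of $v_{hj}$, then on envy) is a cosmetic reorganization of the paper's split (first on envy), and you add the justification for the existence of a source and for PO via Lemma~\ref{lem:po}, which the paper leaves implicit.
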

\begin{proof}
Consider the partial allocation $\x$ right before we allocate an item $j \in M^0$, and assume that it is \efxzero+PO. Suppose we
allocate $j$ to $i$. By the algorithm's description, we know that $i$ is a source in $G_j$. First, since $i$'s valuation
for her bundle does not change, and no other agent's bundle changes, $i$ continues to satisfy the \efxzero~condition. Next,
we can partition $N - i$ into two sets, the set $N^E_{-i}$ of agents who envy $i$ and the set $N^{NE}_{-i}$ of agents who do
not envy $i$. For every agent $h \in N^{NE}_{-i}$, they continue to not envy $i$ even after we allocate $j$, since $v_h(\x_i)$
can not increase, while $v_h(\x_h)$ remains the same. Therefore, all $h \in N^{NE}_{-i}$ continue to satisfy the \efxzero~condition. Finally, since $i$ is a source in $G_j$, we know that every agent $h \in N^{E}_{-i}$ has $v_{hj} < 0$, else they would have an edge to $i$ and $i$ would not be a source. Therefore, $j$ is not considered among the items $g \in \x_i$ for which $v_{hg} \geq 0$, and thus all $h \in N^{E}_{-i}$ continue to satisfy the \efxzero~condition. 
\end{proof}

Phase 3 of Algorithm~\ref{alg:binaryefxpo} is exactly the same as Phase 3 of Algorithm~\ref{alg:restrictedefxpo} and, since
allocating pure bads does not affect items for which agents have value $0$, the allocation is \efxzero+PO after Phase 3, due to
Lemma~\ref{lem:phase3}. 

In conclusion, for instances with binary mixed goods and identical bads, Algorithm~\ref{alg:binaryefxpo} computes an \efxzero+PO 
allocation in polynomial-time. Algorithm~\ref{alg:binaryefxpo} allocates an item to an agent who values it at the highest, hence the allocation also maximizes the social welfare by Lemma~\ref{lem:po}. Finally, Lemma~\ref{lem:efx-propm} implies that any \efxzero~allocation is \propmxzero.
This proves Theorem~\ref{thm:efxzeropo}.

\section{\propmxzero~for Separable Instances}\label{sec:separ}

In this section, we consider separable instances, in which all agents agree on the set of goods and bads.
In other words, we can partition $M$ into $M^{\geq 0}$ and $M^-$ such that $v_{ij} \geq 0$ for all $j \in M^{\geq 0}$
and $v_{ij} < 0$ for all $j \in M^{\geq 0}$, for every agent $i \in N$. We show that:

\begin{theorem}\label{thm:ido-separ}
Given a separable fair division instance $(N,M,V)$, a \propmxzero~allocation can be computed in polynomial-time.
\end{theorem}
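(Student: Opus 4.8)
The plan is to handle the goods $M^{\geq 0}$ and the bads $M^-$ essentially independently, using known single-sided results as black boxes, and then argue that the disjoint union of the two partial allocations is \propmxzero~for the combined instance. More precisely, I would first restrict attention to the sub-instance on $M^{\geq 0}$ and invoke the result of Baklanov et al.~\cite{BaklanovGGS} — actually its $\text{PropM}_0$-strengthening if needed, or more carefully the envy-cycle-based construction — to get an allocation $\y$ of the goods. Separately, I restrict to the sub-instance on $M^-$ and invoke the result of Li, Li and Wu~\cite{LiLiWu} that a \propx~allocation of bads always exists and is polynomial-time computable, obtaining an allocation $\z$ of the bads. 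The final allocation is $\x_i = \y_i \cup \z_i$; since the two halves are over disjoint item sets and valuations are additive, $v_i(\x_i) = v_i(\y_i) + v_i(\z_i)$ and, crucially, $v_i(M) = v_i(M^{\geq 0}) + v_i(M^-)$.

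The key step is then to combine the two per-side guarantees into one \propmxzero~guarantee. Write $\prop_i = \frac1n v_i(M) = \frac1n v_i(M^{\geq 0}) + \frac1n v_i(M^-)$. The \propx~guarantee for the bads gives, for every $c \in \z_i$ with $v_{ic} < 0$, that $v_i(\z_i) - v_{ic} \geq \frac1n v_i(M^-)$; note the bads sub-instance has proportional share $\frac1n v_i(M^-)$ (which is $\leq 0$), and one must check that the Li–Li–Wu guarantee is stated relative to $v_i(M^-)$, not $v_i(M)$ — it is, since it only sees the bads. The $\text{PropM}_0$-type guarantee for the goods gives $v_i(\y_i) + d_i(\y) \geq \frac1n v_i(M^{\geq 0})$, where $d_i(\y) = \max_{i'\neq i}\min_{j \in \y_{i'}, v_{ij}\geq 0} v_{ij}$, and since the bundles $\x_{i'} = \y_{i'}\cup\z_{i'}$ contain the same nonnegatively-valued items from $\y_{i'}$ (the bads contribute nothing nonnegative except possibly zero-valued dummies, which only help), we have $d_i(\x) \geq d_i(\y)$. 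Adding the two inequalities: if agent $i$ has some bad $c \in \x_i = \y_i\cup\z_i$ with $v_{ic}<0$ (necessarily $c\in\z_i$), then
\[
v_i(\x_i - c) + d_i(\x) \;=\; v_i(\y_i) + d_i(\x) + v_i(\z_i) - v_{ic} \;\geq\; \frac1n v_i(M^{\geq 0}) + \frac1n v_i(M^-) \;=\; \prop_i,
\]
but this mixes conditions (i) and (ii) of \propmxzero, which are separate disjuncts — so I need a cleaner split. The right argument: either $v_i(\y_i)+d_i(\y)\geq\prop_i$ already (using $v_i(\z_i)\geq \frac1n v_i(M^-)$ when $\z_i$ is empty or, more robustly, arguing directly), giving disjunct (i); or, when $\z_i\neq\emptyset$, for every $c\in\z_i$ with $v_{ic}<0$ we show $v_i(\x_i - c)\geq\prop_i$ directly as $v_i(\y_i) + (v_i(\z_i)-v_{ic}) \geq \frac1n v_i(M^{\geq 0}) + \frac1n v_i(M^-)$ provided the goods side alone satisfies $v_i(\y_i)\geq\frac1n v_i(M^{\geq 0})$ — which is \emph{not} guaranteed by \propm. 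So the combination is not immediate.

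The main obstacle is precisely this interaction: \propm~for goods only guarantees proportionality \emph{up to a good}, while \propx~for bads guarantees proportionality \emph{up to a bad}, and a naive union can need \emph{both} crutches simultaneously, which \propmxzero~does not permit. I expect the resolution to be a more integrated algorithm rather than a black-box union: for instance, allocate the bads first via an envy-cycle / \propx~procedure, then allocate the goods via a \propm-style round-robin or envy-cycle procedure \emph{on top of the bad-allocation}, so that the maximin-good slack $d_i(\x)$ is chosen with the combined bundles in view, and agents who are already over their (negative-contribution) share from the bad phase absorb goods in a way that pushes everyone to either disjunct (i) or disjunct (ii). Concretely, I would run Li–Li–Wu on $M^-$, then process goods in some order giving each good to a source of the current envy graph among agents who are currently below their combined proportional share, mirroring the PropM existence proof, and track the invariant that every agent either already meets \propmxzero~disjunct (i) via its best-off neighbour's smallest good, or still has all its bads removable to meet disjunct (ii) because the bad phase guaranteed it. Verifying that this invariant is maintained across the goods phase — in particular that adding a good to a source never destroys another agent's disjunct-(ii) status — is the crux and will require the same kind of case analysis as Lemma~\ref{lem:phase3}, now with the proportional-share bookkeeping from Lemma~\ref{lem:efx-propm} adapted to the separable structure.
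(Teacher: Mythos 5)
Your diagnosis of why the black-box union fails is exactly right and matches the difficulty the paper is designed around: \propm~on the goods and \propx~on the bads each supply a different ``crutch,'' and the disjunctive definition of \propmxzero~forbids using both at once. However, your proposed repair is both incomplete and, more importantly, oriented in the wrong direction. You allocate the bads first and the goods second, and you explicitly defer the crux (``verifying that this invariant is maintained across the goods phase''). That deferral hides a real obstacle: as each good enters the pool, every agent's proportional share $\frac{1}{n}v_i(\cdot)$ \emph{increases}, so an agent who satisfied disjunct (ii) after the bad phase sees that guarantee erode with every good given to someone else, and you give no mechanism forcing such an agent to transition to disjunct (i). The Baklanov et al.\ procedure cannot simply be run ``on top of'' a bad allocation as a black box, since its guarantee is relative to $\frac1n v_i(M^{\geq 0})$, not to the combined share with a negative offset.

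The paper's proof reverses the order precisely to exploit the monotonicity you are missing. It first runs the goods algorithm on $M^{\geq 0}$, so every agent satisfies disjunct (i) relative to $\frac1n v_i(M^{\geq 0})$. It then allocates the bads one at a time in IDO order (worst first), each to a sink of the \emph{top-envy} graph, resolving top-envy cycles as needed. Adding a bad to the pool only \emph{decreases} the right-hand side $\frac1n v_i(M^{\geq 0}\cup M^-_{\leq j})$ and does not change $d_i(\x)$ (bads contribute no nonnegatively-valued items), so any agent whose bundle is untouched keeps whichever disjunct she had. An agent whose bundle changes via a cycle swap receives her favourite bundle and is fully proportional. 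The agent who receives bad $j$ was a sink, hence envied nobody, hence held at least $\frac1n v_i(M^{\geq 0}\cup M^-_{\leq j-1})$; combined with the IDO ordering ($v_{ic}\leq v_{ij}\leq(1-\frac1n)v_{ij}$ for every bad $c$ she already holds), this yields disjunct (ii) for her. Finally, a separate reduction (Lemma~\ref{lem:gen-separ}) lifts the IDO-bads result to general bads. Your proposal contains the right ingredients but not this argument; as written it is a plan with its central step unproven, and the specific ordering you chose makes that step harder rather than easier.
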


We prove this theorem by presenting an algorithm which returns a \propmxzero~
allocation for this instance, combining ideas from previously known algorithms for instances with goods only 
\cite{BaklanovGGS} and pure bads instances \cite{LiLiWu}.

The idea behind the algorithm is to first obtain a \propmxzero~allocation with respect to $M^{\geq 0}$ and then start 
allocating $M^-$. We show how this allocation of bads can be performed while maintaining \propmxzero~in the
case of an IDO instance, in which all agents have the same ordinal preference for all bads in 
$M^-$. Recall that in an IDO instance, there exists an ordering of the bads in $M^-$ such that
for all agents $i \in N$, $v_{i1} \leq v_{i2} \leq \dots \leq v_{im^-}$, where $m^- = |M^-|$. Notice that in an
IDO instance, the cardinal preferences can be significantly different across different agents. Afterwards, in
Section~\ref{sec:gen-chores}, we present a reduction from \cite{LiLiWu}, from an instance which is IDO for 
bads to a general instance for bads, and show that it holds even in the mixed manna setting. This allows us to obtain
a \propmxzero~allocation for any separable instance.

\subsection{Separable Instances with IDO Bads}\label{sec:ido-chores}

Our algorithm first uses the algorithm of Baklanov et al. \cite{BaklanovGGS} for goods to obtain a
\propmzero~allocation with respect to $M^{\geq 0}$. Note that this allocation will also trivially be \propmxzero, since we
have only assigned items from $M^{\geq 0}$. Also, Baklanov et al. do not differentiate between \propm~and
\propmzero, and thus mention their result as giving a \propm~allocation, but in fact their allocation satisfies the stronger
guarantee of being \propmzero. Afterwards, we run the envy-cycle elimination algorithm of Li et al. \cite{LiLiWu} for pure bads, 
and show that it obtains an allocation that is \propmxzero~with respect to $M$.

For the remainder of this section, we use $\textsc{Goods}$ to refer to Algorithm $1$ of \cite{BaklanovGGS}, and we assume
that the bads in $M^-$ are ordered such that $v_{i1} \leq v_{i2} \leq \dots \leq v_{im^-}$ (recall that
$v_{ij} < 0$ for all $i \in N, j \in M^-$, i.e. the bads are ordered from most painful to least painful).
Furthermore, given any allocation $\x$ of a subset of items, we create the top-envy graph of $\x$, $G^*_\x$. Consider a cycle $C$ of $G^*_\x$. We call a \emph{reallocation according to $C$} a new allocation $\x^C$ where we reallocate the
bundles of $C$ backwards along the cycle.

\begin{algorithm}[h]
\caption{Separable Instances with IDO Bads}\label{alg:ido-chores}
\textbf{Input:} Separable instance $(N,M,V)$\\
\textbf{Output:} Allocation $\x$
\begin{algorithmic}[1]
\State $M^{\geq 0} \gets \{j \in M : \forall i \in N, \: v_{ij} \geq 0\}$
\State $M^- \gets M \setminus M^{\geq 0}$
\State $\x^{\geq 0} \gets \textsc{Goods}(M^{\geq 0})$ \Comment{\propmzero~Algorithm from \cite{BaklanovGGS}}
\State $\x \gets \x^{\geq 0}$
\For{$j \gets 1$ to $m^-$}
\While{$\nexists$ any sinks in $G^*_\x$}
\State Let $C$ be a cycle in $G^*_\x$
\State Reallocate bundles according to $C$ \Comment{See text}
\EndWhile
\State Let $i$ be a sink in $G^*_\x$
\State $\x_i \gets \x_i + j$
\EndFor
\State \Return $\x$
\end{algorithmic}
\end{algorithm}

Since $\textsc{Goods}$ runs in polynomial time, Algorithm~\ref{alg:ido-chores} runs in polynomial time as well. The following lemma about allocation $\x^{\ge 0}$ (Line 3) follows directly from \cite{BaklanovGGS}.

\begin{lemma}\label{lem:baklav}
For every agent $i \in N$,
\[
v_i(\x^{\geq 0}_i) + d^{\geq 0}_i(\x^{\geq 0}) \geq \frac{1}{n} \cdot v_i(M^{\geq 0}),
\]
where $d^{\geq 0}_i(\x^{\geq 0}) = \max_{i' \neq i} \min_{j \in \x^{\geq 0}_{i'}} {v_{ij}}$.
\end{lemma}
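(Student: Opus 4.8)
The plan is to invoke directly the correctness guarantee of Algorithm~1 of Baklanov et al.~\cite{BaklanovGGS}, which we have renamed $\textsc{Goods}$, applied to the sub-instance $(N, M^{\geq 0}, V)$. First I would note that this sub-instance is a bona fide goods instance: by definition of $M^{\geq 0}$ we have $v_{ij} \geq 0$ for every $i \in N$ and every $j \in M^{\geq 0}$. Hence $\textsc{Goods}(M^{\geq 0})$ is well-defined, runs in polynomial time, and, by the main result of \cite{BaklanovGGS}, returns an allocation $\x^{\geq 0}$ of $M^{\geq 0}$ with $v_i(\x^{\geq 0}_i) + d_i(\x^{\geq 0}) \geq \frac1n v_i(M^{\geq 0})$ for every $i \in N$, where $d_i$ is the maximin-good term over the other agents' bundles. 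This is precisely the asserted inequality, modulo the exact form taken for $d_i$.

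The one point requiring care is that the lemma states the bound in the \propmzero~form, i.e.\ with $d^{\geq 0}_i(\x^{\geq 0}) = \max_{i' \neq i} \min_{j \in \x^{\geq 0}_{i'}} v_{ij}$ where the inner minimum ranges over \emph{all} items of $\x^{\geq 0}_{i'}$, including those valued at $0$ by $i$ — a strictly stronger statement than the \propm~guarantee as literally phrased in \cite{BaklanovGGS}. To obtain it, I would revisit their correctness proof and check that the strict positivity of the maximin good is never used: at the step where their argument exhibits, for each agent $i$, a witness $i'$ satisfying $v_i(\x^{\geq 0}_i) + \min_{j \in \x^{\geq 0}_{i'}} v_{ij} \geq \frac1n v_i(M^{\geq 0})$, the term $\min_{j \in \x^{\geq 0}_{i'}} v_{ij}$ is a valid lower bound whether that minimum is positive or zero, so taking the maximum over $i' \neq i$ yields the \propmzero~bound. (As the surrounding discussion already observes, Baklanov et al.\ simply did not distinguish the two notions.) The only degenerate case, an agent $i'$ receiving no item, is handled by omitting that term from the maximum, exactly as the definition of $d_i$ prescribes; and since zero-valued goods can be assigned to any agent without changing anyone's utility, no termination or tie-breaking step of $\textsc{Goods}$ depends on strict positivity.

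The main obstacle is therefore not a new combinatorial argument but the careful importing of a black-box result at the right strength: one must confirm that $\textsc{Goods}$ and its analysis in \cite{BaklanovGGS} go through unchanged in the presence of zero-valued goods, and that the inequality they establish is in fact the $v_{ij} \geq 0$ version rather than merely the $v_{ij} > 0$ one. I expect this to be routine, since in their proof the maximin good serves only as a lower bound on what agent $i$ could gain from some other bundle, and that lower bound is insensitive to whether its value equals zero.
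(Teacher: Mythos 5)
Your proposal is correct and matches the paper's treatment: the paper likewise proves Lemma~\ref{lem:baklav} by directly invoking Algorithm~1 of \cite{BaklanovGGS}, together with the observation (made in the surrounding text) that although Baklanov et al.\ only state a \propm~guarantee, their argument in fact yields the stronger \propmzero~bound with the inner minimum ranging over all items including zero-valued ones. Your additional care in checking that their analysis is insensitive to zero-valued goods is exactly the point the paper relies on implicitly.
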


Next, we show that Algorithm~\ref{alg:ido-chores} maintains \propmxzero~with respect to the currently allocated set of items
after allocating a pure bad.

\begin{lemma}\label{lem:separ}
Let $M^-_{\leq j} = \set{1, 2, \dots, j} \subseteq M^-$ denote the set of bads allocated after the $j$-th step 
of the For-Loop in
Lines $5-12$. Then, after the $j$-th step, for all agents $i \in N$, either
\[
v_i(\x_i) + d_i(\x) \geq \frac{1}{n} \cdot v_i\prn{M^{\geq 0} \cup M^-_{\leq j}},
\]
where $d_i(\x) = \max_{i' \neq i} \min_\stack{j \in \x_{i'}}{v_{ij} \geq 0} {v_{ij}}$, or
$\forall c \in \x_i \setminus \x^{\geq 0}_i,$
\[
v_i(\x_i - c) \geq \frac{1}{n} \cdot v_i\prn{M^{\geq 0} \cup M^-_{\leq j}}.
\]
\end{lemma}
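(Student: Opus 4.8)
The plan is to proceed by induction on $j$, the number of pure bads allocated so far. The base case $j=0$ is exactly Lemma~\ref{lem:baklav}: at that point $\x = \x^{\geq 0}$, no bad has been allocated, so the second disjunct is vacuous and the first disjunct holds since $d_i(\x^{\geq 0}) \geq d^{\geq 0}_i(\x^{\geq 0})$ (allowing zero-valued items into the $\min$ can only decrease it, but here we are comparing against the Baklanov bound, so I should be careful: $d_i(\x)$ with the ``$v_{ij}\ge 0$'' restriction is exactly $d^{\ge 0}_i(\x^{\ge 0})$ when all allocated items are non-negative, which they are after Line~3). For the inductive step, assume the statement holds after step $j-1$; I want to show it holds after allocating bad $j$ to a sink $i^\star$ of the top-envy graph $G^*_\x$ (after the while-loop has reallocated along cycles to expose a sink). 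I first need to observe that reallocating bundles along a cycle of $G^*_\x$ is a Pareto improvement (as noted in Def.~\ref{def:envy-graph}) and, more importantly, preserves the invariant: reallocation only helps each agent in the cycle, so if agent $i$ had $v_i(\x_i) + d_i(\x) \geq \tfrac1n v_i(M^{\ge 0}\cup M^-_{\le j-1})$ or the up-to-any-bad condition, the same agent, now holding a weakly better bundle, still satisfies it — but I must check that $d_i$ does not drop in a way that breaks things, which requires that the bundle $i$ envies most still contains a suitable good; since $d_i$ is a max over \emph{all} other agents' bundles, and the multiset of bundles is merely permuted, $d_i(\x)$ is unchanged by a cyclic reallocation. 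Also I need the standard fact that the while-loop terminates and exposes a sink: $G^*_\x$ is a functional graph on the envying agents (out-degree $\le 1$), so it has no sink only if every weakly-connected component with an envying agent contains a cycle; eliminating cycles strictly increases total utility (a potential argument) so the loop halts, after which some agent has out-degree $0$ in $G^*_\x$, i.e.\ is a sink of the full envy graph as well.

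Next, the core of the argument: bad $j$ goes to a sink $i^\star$, so before this assignment $v_{i^\star}(\x_{i^\star}) \ge v_{i^\star}(\x_h)$ for every $h$. For agents $h \neq i^\star$, their bundle is unchanged and every other agent's bundle is unchanged or worse (only $i^\star$'s got worse), so $v_h(\x_h)$ is unchanged and $d_h(\x)$ can only decrease because $i^\star$'s bundle shrank in value — but here the IDO structure and the ordering of bads is the point: I claim $d_h(\x)$ does not decrease at all, because the quantity $d_h$ restricts the inner $\min$ to items with $v_{hj'}\ge 0$, and we only added a \emph{bad} (value $<0$ for everyone) to $i^\star$'s bundle, so the set of non-negative items in $i^\star$'s bundle — and hence the inner min over them — is untouched. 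Thus for $h\neq i^\star$ the left-hand sides of both disjuncts are unchanged while the right-hand side $\tfrac1n v_h(M^{\ge 0}\cup M^-_{\le j})$ \emph{decreased} by $\tfrac1n|v_{hj}| > 0$, so whichever disjunct held before still holds. The only agent needing real work is $i^\star$. Here I will show the \emph{second} disjunct holds for $i^\star$, i.e.\ for every $c\in\x_{i^\star}\setminus\x^{\ge 0}_{i^\star}$ (every pure bad in $i^\star$'s current bundle, including the newly added $j$), $v_{i^\star}(\x_{i^\star} - c) \ge \tfrac1n v_{i^\star}(M^{\ge 0}\cup M^-_{\le j})$.

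For this I combine two facts. First, because bads are processed in non-increasing order of disutility ($v_{i1}\le v_{i2}\le\cdots$, most painful first) and the IDO ordering is common to all agents, the bad $j$ just added is the \emph{least painful so far} to $i^\star$ among all bads $\le j$; in particular among all bads currently in $\x_{i^\star}$, $j$ is the one with smallest disutility to $i^\star$, so $v_{i^\star}(\x_{i^\star} - c) \ge v_{i^\star}(\x_{i^\star} - j) = v_{i^\star}(\x_{i^\star}^{\,\mathrm{old}})$ for any pure bad $c\in\x_{i^\star}$, where $\x_{i^\star}^{\,\mathrm{old}}$ is $i^\star$'s bundle before step $j$ — wait, that inequality goes the wrong way; more carefully, removing the \emph{least} painful bad leaves the \emph{most} disutility, so $v_{i^\star}(\x_{i^\star}-j)\le v_{i^\star}(\x_{i^\star}-c)$ for every bad $c$, i.e.\ it suffices to prove the bound for $c=j$, equivalently $v_{i^\star}(\x_{i^\star}^{\,\mathrm{old}}) \ge \tfrac1n v_{i^\star}(M^{\ge 0}\cup M^-_{\le j})$. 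Second, I use the sink property together with the inductive hypothesis applied at step $j-1$ to $i^\star$: since $i^\star$ is a sink, it envies no one, so $v_{i^\star}(\x_{i^\star}^{\,\mathrm{old}})\ge v_{i^\star}(\x_h^{\,\mathrm{old}})$ for all $h\ne i^\star$; summing over all $n$ agents gives $n\cdot v_{i^\star}(\x_{i^\star}^{\,\mathrm{old}}) \ge \sum_h v_{i^\star}(\x_h^{\,\mathrm{old}}) = v_{i^\star}(M^{\ge 0}\cup M^-_{\le j-1})$, hence $v_{i^\star}(\x_{i^\star}^{\,\mathrm{old}}) \ge \tfrac1n v_{i^\star}(M^{\ge 0}\cup M^-_{\le j-1}) = \tfrac1n v_{i^\star}(M^{\ge 0}\cup M^-_{\le j}) - \tfrac1n v_{i^\star j} > \tfrac1n v_{i^\star}(M^{\ge 0}\cup M^-_{\le j})$ since $v_{i^\star j}<0$. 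Combining, $v_{i^\star}(\x_{i^\star}-c) \ge v_{i^\star}(\x_{i^\star}-j) = v_{i^\star}(\x_{i^\star}^{\,\mathrm{old}}) \ge \tfrac1n v_{i^\star}(M^{\ge 0}\cup M^-_{\le j})$, giving the second disjunct for $i^\star$.

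The main obstacle, and the step I would spend the most care on, is the bookkeeping around the while-loop of cyclic reallocations: I must verify that (a) the loop terminates, (b) after it a sink of $G^*_\x$ exists, and (c) the inductive invariant is preserved by each reallocation — in particular that $d_i(\x)$ is invariant (or at least non-decreasing) under a cyclic permutation of bundles, which is true because $d_i$ is a maximum over the multiset of other agents' bundles and that multiset is unchanged; and (d) that ``sink in $G^*_\x$'' gives me the full envy-freeness inequality $v_{i^\star}(\x_{i^\star})\ge v_{i^\star}(\x_h)$ for \emph{all} $h$, not merely the top-envied one — this holds because a sink in the top-envy graph envies nobody at all (if it envied someone it would have an outgoing edge to its most-envied agent). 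A secondary subtlety is the claim that the newly allocated bad $j$ is the least painful bad in $\x_{i^\star}$ to $i^\star$: this uses that the For-loop processes bads in the common IDO order from most to least painful, so every bad already in $i^\star$'s bundle was processed earlier and is weakly more painful to $i^\star$ than $j$; I will state this explicitly as it is where the IDO hypothesis is essential.
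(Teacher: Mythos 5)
Your overall strategy---induction on $j$, reducing the claim for the sink agent $i^\star$ to the single case $c=j$ via the common IDO ordering of the bads, and then invoking the sink property to get $v_{i^\star}(\x^{\mathrm{old}}_{i^\star}) \geq \tfrac1n\, v_{i^\star}(M^{\geq 0}\cup M^-_{\leq j-1}) \geq \tfrac1n\, v_{i^\star}(M^{\geq 0}\cup M^-_{\leq j})$---is exactly the paper's argument; the paper merely folds your two steps into one chain of inequalities using $v_{ic}\le v_{ij}\le(1-\tfrac1n)v_{ij}$. Your treatment of agents whose bundles are untouched is also right: adding a pure bad to $i^\star$'s bundle cannot affect their $d_h$, since the inner $\min$ ranges only over non-negatively valued items, while the right-hand side only decreases.

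The one step that fails as written is your justification for why the invariant survives a top-envy-cycle reallocation, namely the claim that $d_i(\x)$ is unchanged because ``the multiset of bundles is merely permuted.'' That is true for an agent \emph{outside} the cycle, but for an agent $i$ \emph{inside} the cycle it is false: $d_i$ is a maximum over the \emph{other} agents' bundles, $i$'s own bundle is excluded, and $i$'s own bundle changes under the reallocation, so the collection over which $d_i$ is computed gains the bundle $i$ gave up and loses the bundle $i$ received; $d_i$ can strictly decrease. Similarly, ``now holding a weakly better bundle, the agent still satisfies the up-to-any-bad disjunct'' does not follow, because the set of bads in the new bundle is entirely different from the old one. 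Both issues are repaired by the argument the paper uses (and which you already gesture at): after the reallocation each agent $i$ in the cycle holds her most-preferred bundle among all $n$ bundles, hence envies nobody, and summing $v_i(\x_i)\geq v_i(\x_h)$ over all $h$ gives $v_i(\x_i)\geq \tfrac1n\, v_i(M^{\geq 0}\cup M^-_{\leq j-1})$, which yields both disjuncts outright (the first since $d_i\geq 0$ whenever it is defined, the second since removing a bad only increases $v_i(\x_i - c)$). Replace the invariance claim with this observation and your proof is complete.
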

\begin{proof}
We prove the lemma via induction on $j$. Notice that before the For-Loop in Lines $5-12$ of Algorithm~\ref{alg:ido-chores},
the lemma holds, because of Lemma~\ref{lem:baklav}. Assume that the lemma holds right after the $(j-1)$-th step of the
For-Loop, and consider the allocation $\x$ immediately after the $j$-th step of the For-Loop. Let $i$ be the agent that received bad $j$, and let $\x^{(b)}_i$ denote $i$'s bundle right before step $j$.

First, consider an agent $i' \neq i$. Notice that if $\x_{i'}$ did not change during the $j$-th step of the For-Loop, then
the lemma holds for $i'$, by our induction hypothesis, as the left-hand side of both inequalities remained the same, while
the right-hand side decreased. Furthermore, the only way $\x_{i'}$ could have changed during the $j$-th step of the For-Loop
is via a top-envy cycle reallocation in $G^*_\x$. In this case, $i'$ received her bust bundle among all bundles of $\x$,
since $G^*_\x$ is the top-envy graph, which implies that $i'$ does not envy any other agent after the $j$-th step,
thus $v_{i'}(\x_h) \geq \frac{1}{n} \cdot v_{i'}\prn{M^{\geq 0} \cup M^-_{\leq j}}$ and $i'$ satisfies the lemma.

Next, consider agent $i$. Since $i$ was a sink in the top-envy graph $G^*_\x$ prior to step $j$ of the For-Loop, we know
that $i$ did not envy any other agent, and thus
$v_i(\x_i) \geq \frac{1}{n} \cdot v_i\prn{M^{\geq 0} \cup M^-_{\leq j-1}}$. This implies that, for every bad
$c \in \prn{\x^{(b)}_i \setminus \x^{\geq 0}_i} + j$, we have
\begin{align*}
v_i\prn{\x^{(b)}_i + j - c} &= v_i\prn{\x^{(b)}_i + j} - v_{ic} \\
&= v_i\prn{\x^{(b)}_i} + v_{ij} - v_{ic} \\
&\geq \frac{1}{n} \cdot v_i\prn{M^{\geq 0} \cup M^-_{\leq j-1}} + v_{ij}- v_{ic} \\
&= \frac{1}{n} \cdot v_i\prn{M^{\geq 0} \cup M^-_{\leq j}} -\frac{v_{ij}}{n} + v_{ij} - v_{ic} \\
&\geq \frac{1}{n} \cdot v_i\prn{M^{\geq 0} \cup M^-_{\leq j}},
\end{align*}
where the last inequality follows from the fact that $v_{ic} \leq v_{ij} \leq \prn{1-\frac{1}{n}} v_{ij}$, since the
instance is IDO for bads and $v_{ik} < 0$ for all $k \in M^-$. Therefore, $i$ satisfies the lemma after the $j$-th
step.
\end{proof}

\subsection{General Separable Instances}\label{sec:gen-chores}

Consider now a general separable instance. Our idea is to again use the $\textsc{Goods}$ algorithm of \cite{BaklanovGGS},
and then convert the subproblem of allocating $M^-$ into an IDO instance via a reduction. This reduction appears in 
Li et al. \cite{LiLiWu}, and is commonly used in designing approximation algorithms for MMS fair allocations 
\cite{BouveretLemaitre, BarmanMurthy, HuangLu}. We show here that it can be used in the mixed manna setting as well.

\begin{lemma}\label{lem:gen-separ}
If there exists a polynomial time algorithm that given any separable mixed manna instance with IDO bads computes 
a \propmx~(resp. \propmxzero) allocation, then there exists a polynomial time algorithm that given any separable mixed manna
instance (with general bads) computes a \propmx~(resp. \propmxzero) allocation.
\end{lemma}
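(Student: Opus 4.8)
The plan is to exhibit the standard ``sort the bads'' reduction and verify that it preserves the PropMX (resp.\ PropMX$_0$) guarantee in the mixed-manna setting. Given a general separable instance $(N, M, V)$ with item partition $M = M^{\geq 0} \cup M^-$, I would first construct an auxiliary instance $(N, M, \widehat{V})$ that keeps the goods in $M^{\geq 0}$ untouched but replaces the disutilities on $M^-$ as follows: for each agent $i$, sort the bads of $M^-$ by $i$'s own disutility, say $v_{i,\pi_i(1)} \leq v_{i,\pi_i(2)} \leq \dots \leq v_{i,\pi_i(m^-)}$, and set $\widehat{v}_{i,k} = v_{i,\pi_i(k)}$ for the $k$-th bad in a fixed global labelling $1, \dots, m^-$ of $M^-$. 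Thus $\widehat{V}$ agrees with $V$ on goods, and on the bads every agent sees the \emph{same} global order $1 \prec 2 \prec \dots \prec m^-$; in particular $(N, M, \widehat V)$ is separable with IDO bads, and crucially $v_i(M) = \widehat{v}_i(M)$ for every $i$, so the proportional shares $\prop_i$ are identical in the two instances.

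Next I would run the assumed polynomial-time algorithm on $(N, M, \widehat V)$ to obtain a PropMX (resp.\ PropMX$_0$) allocation $\widehat{\x}$, and then translate it back to an allocation $\x$ of the original instance. The translation keeps the goods assignment identical, and reassigns the bads of $M^-$ greedily: process agents (or, more carefully, process the bad-slots) so that whenever $\widehat{\x}_i$ contains $k$ bads of $M^-$, we give $i$ the $k$ \emph{cheapest-for-$i$} bads among those not yet assigned. A clean way to organise this is to process the global bad-labels $1, 2, \dots, m^-$ in order: label $\ell$ was assigned by $\widehat\x$ to some agent $i(\ell)$; assign to $i(\ell)$ her currently-cheapest unassigned original bad. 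By a straightforward exchange/rearrangement argument this yields, for every agent $i$, a bundle $\x_i \cap M^-$ of the same cardinality as $\widehat\x_i \cap M^-$ whose multiset of $i$-disutilities is \emph{dominated} (entry-by-entry, after sorting) by that of $\widehat\x_i \cap M^-$; hence $v_i(\x_i) \geq \widehat v_i(\widehat\x_i)$, and moreover for the single worst bad $c \in \x_i \cap M^-$ one has $v_i(\x_i - c) \geq \widehat v_i(\widehat\x_i - \widehat c)$ where $\widehat c$ is the worst bad in $\widehat\x_i$. Since the goods parts are identical, these inequalities extend to the whole bundles.

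From here the PropMX guarantee transfers almost mechanically. If $\widehat\x$ satisfies clause~(ii) of PropMX for agent $i$ (every bad removable keeps her above $\prop_i$), then in particular removing the worst bad suffices, and by the domination inequality $v_i(\x_i - c) \geq \widehat v_i(\widehat\x_i - \widehat c) \geq \prop_i$ for the worst $c \in \x_i$, so (ii) holds for $i$ in $\x$ as well (one has to note that if (ii) holds for the worst bad it holds for all bads, since removing a less painful bad leaves at least as much value). If instead $\widehat\x$ satisfies clause~(i), i.e.\ $\widehat v_i(\widehat\x_i) + d_i(\widehat\x) \geq \prop_i$, then since $d_i$ is computed only over positively-valued (resp.\ nonnegatively-valued) items in \emph{other} agents' bundles, and the goods assignment is unchanged, we get $d_i(\x) = d_i(\widehat\x)$; combined with $v_i(\x_i) \geq \widehat v_i(\widehat\x_i)$ this gives $v_i(\x_i) + d_i(\x) \geq \prop_i$, so (i) holds in $\x$. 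Either way $\x$ is PropMX, and the identical argument with ``$\geq 0$'' in place of ``$> 0$'' in $d_i$ handles the PropMX$_0$ case. The reduction is clearly polynomial.

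The step I expect to require the most care is establishing the domination inequality for the back-translation --- i.e.\ that the greedy ``cheapest unassigned bad'' reassignment really produces, for every agent simultaneously, a per-entry-dominated disutility profile. The subtlety is that the greedy choices interact across agents (a bad cheap for one agent may be the only tolerable one for another), so one cannot reason agent-by-agent in isolation. The right tool is an exchange argument on the bipartite assignment of $M^-$: starting from any assignment with the correct cardinalities, repeatedly swap a pair of bads between two agents whenever it lexicographically improves the sorted disutility vectors, and argue this terminates at the greedy assignment and that at termination each agent's profile dominates what it was under $\widehat\x$. This is exactly the argument used in the MMS literature (e.g.\ in \cite{BouveretLemaitre, BarmanMurthy, HuangLu}) and in \cite{LiLiWu}; I would invoke it, spelling out only the additional observation that the goods bundles play no role and the proportional shares are preserved because $v_i(M) = \widehat v_i(M)$.
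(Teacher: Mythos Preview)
Your overall reduction is the same as the paper's, and the observations that $\prop_i$ and $d_i$ are preserved (since the goods bundles are untouched and bads are negative for everyone) are correct. The gap is in the back-translation step: processing the bad-labels ``in order $1,2,\dots,m^-$'' (i.e., most-painful slot first) does \emph{not} yield the per-entry domination you claim, and can in fact destroy PropMX. Take two agents and three bads with $v_a=(-1,-2,-100)$, $v_b=(-100,-1,-2)$, so $\prop_i=-51.5$. The IDO instance has $\widehat v_i=(-100,-2,-1)$ for both, and $\widehat\x_a=\{2,3\}$, $\widehat\x_b=\{1\}$ is PropX. Your procedure handles label~$1$ first, so $b$ grabs her cheapest original bad (bad~$2$); then at label~$2$, $a$ takes bad~$1$; at label~$3$, $a$ is stuck with bad~$3$. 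Now $\x_a=\{1,3\}$, and removing bad~$1$ leaves value $-100<-51.5$, so PropMX fails for $a$. (If you want $v_i(M)>0$, pad with many unit-value goods split evenly; the same breakdown occurs since $d_a(\x)=1$ is then too small to help.)

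The paper processes the labels in the \emph{reverse} order, from $m^-$ down to $1$ (least-painful slot first), and this direction is exactly what makes the pigeonhole go through: when slot $j$ is reached, $m^- - j$ original bads have already been taken, while agent $i_j$ has $m^- - j + 1$ original bads of value at least $\widehat v_{i_j,j}$ (namely $\sigma_{i_j}(j),\dots,\sigma_{i_j}(m^-)$), so at least one such bad remains for her to pick. With your order the early greedy picks consume the high-value bads while filling low-value slots, leaving nothing adequate for the later high-value slots. The fix is simply to reverse the loop; after that, the pigeonhole argument above directly gives the bijection $f_i$ with $v_{i,e}\ge \widehat v_{i,f_i(e)}$, and the rest of your proof goes through as written without needing the exchange argument you sketched.
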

\begin{proof}
Given any separable mixed manna instance $I=(N, M, V)$, we start by creating $I'=(N, M, V')$, a new separable mixed manna instance
with IDO bads. $V'$ is defined as follows. For all items that are not pure bads, the valuation stays the same for all agents, i.e.
$v'_{ij} = v_{ij}$ for all $j \notin M^-$. Without loss of generality, let $M^- = \set{1, 2, \dots, m^-}$. Next, let $\sigma_i(j) \in M$ be the
$j$-th lowest-valued pure bad under value function $v_i$ (i.e. $\sigma_i(1)$ is the ``worst'' bad for agent $i$), and let
$v'_{ij} = v_{i \sigma_i(j)}$. It follows easily that instance $I'$ has IDO bads, with $v'_{i1}\leq v'_{i2}\leq \ldots \leq v'_{im^-}$.
	
Afterwards, we run the algorithm for IDO instances on instance $I'$, and get a \propmx~(resp. \propmxzero) allocation $\x'$ with 
respect to $I'$. By definition, for every agent $i \in N$, we have either
\[
v'_i(\x'_i) + d_i(\x') \geq \frac{1}{n} \cdot v'_i(M),
\]
where $d_i(\x') = \max_{i' \neq i} \min_\stack{j \in \x'_{i'}}{v'_{ij} > 0} {v'_{ij}}$ (resp. $d_i(\x') = \\ \max_{i' \neq i} \min_\stack{j \in \x'_{i'}}{v'_{ij} \geq 0} {v'_{ij}}$), or
$\forall c \in \x'_i \cap M^-$,
\[
v'_i(\x'_i - c) \geq \frac{1}{n} \cdot v'_i(M).
\]
Next, we construct a \propmx~(resp. \propmxzero) allocation for instance $I$, using $\x'$. For every item $j \notin M^-$, assign it to the agent that gets it in $\x'$,
i.e. if $j \in \x'_i$, then assign $j$ to $\x_i$. For every pure bad $j \in M^-$, in sequential order from $j = m^-$ to $1$, let $i_j$ denote
the agent that gets bad $j$ in $\x'$. We let $i_j$ select her favourite unallocated bad, $c = \argmax_{1 \leq \ell \leq j} {v_{i\ell}}$ and
assign it to $\x_i$. We show that there is a bijection $f_i: \x_i \rightarrow \x_i'$ such that for any bad $e \in \x_i$, we have $v_{ie} \geq v'_{ij}$,
where $j = f_i(e)$. Recall that $v'_{ij} \geq v'_{i \ell}$ for all $\ell \leq j$, and that $j$ is the $j$-th lowest valued pure bad for agent $i$.
Notice that there must exist an unallocated bad $e'$ with value $v_{ie'} \geq v'_{ij}$; if $\sigma_i(j)$ is unallocated at the beginning of step
$j$, then $e' = \sigma_i(j)$. If $\sigma_i(j)$ was allocated, by the pigeonhole principle and because at most $j-1$ bads have been allocated
at the beginning of step $j$, there must exist an unallocated bad $\sigma_i(k)$, where $k < j$, and we know that
$v_{i\sigma_i(k)} \geq v_{i\sigma_i(j)}$. Thus, there must exist an unallocated bad $e'$ with value $v_{ie'} \geq v'_{ij}$. Since $e$ has maximum
value for $i$ among unallocated bads under value function $v_i$, we have $v_{ie} \geq v'_{ij}$. Note that this process also ensures
$|\x'_i| = |\x_i|$, $v'_i(M) = v_i(M)$ and $d_i(\x') = d_i(\x)$ for all agents $i \in N$.

For every agent $i \in N$, there are two cases.
\begin{itemize}
    \item Assume $i$ is \propmx~(resp. \propmxzero) satisfied in $\x'$ because 
    \[
    v'_i(\x'_i) + d_i(\x') \geq \frac{1}{n} \cdot v'_i(M).
    \]
    Notice that $\x'$ and $\x$ are identical with respect to items that are not pure bads, and also, $v_i(\x_i \cap M^-) \geq v'_i(\x'_i \cap M^-)$,
    which implies $v_i(\x_i) \geq v'_i(\x'_i)$. Since $v'_i(M) = v_i(M)$ and $d_i(\x') = d_i(\x)$, we get:
    \[
    v_i(\x_i) + d_i(\x) \geq \frac{1}{n} \cdot v_i(M),
    \]
    and $i$ is \propmx~(resp. \propmxzero) satisfied in $\x$.
    
    \item Assume now that $i$ is \propmx~(resp. \propmxzero) satisfied in $\x'$ because $\forall c \in \x'_i \cap M^-$,
    \[
    v'_i(\x'_i - c) \geq \frac{1}{n} \cdot v'_i(M).
    \]
    We know that $\forall c \in \x_i \cap M^-$, $v_{ic} \geq v'_{if_i(c)}$. Thus, for every $c \in \x_i \cap M^-$:
    \[
        v_i(\x_i - c) = \sum_\stack{\ell \in \x_i}{\ell \neq c} {v_{i \ell}} \geq \sum_\stack{\ell' \in \x'_i}{\ell' \neq f_i(c)} {v'_{i \ell'}} = v_i(\x' - f_i(c)) \geq \frac{1}{n} \cdot v_i(M),
    \]
    and $i$ is \propmx~(resp. \propmxzero) satisfied in $\x$.
\end{itemize}
We conclude that $\x$ is a \propmx~(resp. \propmxzero) allocation with respect to $I$.
\end{proof}

Lemmas~\ref{lem:separ} and \ref{lem:gen-separ} together prove Theorem~\ref{thm:ido-separ}.

\section{Revisiting Restricted Mixed Goods}\label{sec:efx-gen}

In this section, we revisit the restricted mixed good setting of Section~\ref{sec:efxpo} and use the reduction of Lemma~\ref{lem:gen-separ}. 

\subsection{\efx~for Restricted Mixed Goods with IDO Bads}

We show that we can obtain \efx~allocations for the restricted mixed goods setting with IDO bads. 
\begin{theorem}\label{thm:restrictedgoodsefx-ido}
Given a fair division instance of restricted mixed goods and IDO bads, an \efx~allocation can be computed in polynomial-time. Furthermore, for the case of binary mixed goods, a \efxzero~allocation can be computed in polynomial-time.
\end{theorem}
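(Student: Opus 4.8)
The plan is to recycle the three-phase structure of Algorithm~\ref{alg:restrictedefxpo} (and of Algorithm~\ref{alg:binaryefxpo} for the binary subcase) and to change only the allocation of the pure bads. First I would run Phases~1 and~2 of the relevant algorithm verbatim: since these phases only ever allocate items of $M^+$ and $M^0$ and never inspect $M^-$, they are oblivious to whether the pure bads are identical or merely IDO. Thus, by Lemmas~\ref{lem:phase1}--\ref{lem:phase2} (resp. Lemmas~\ref{lem:phase1-bin}--\ref{lem:phase2-bin}), the partial allocation $\x$ after Phase~2 is \efx~(resp. \efxzero), PO, and in particular has an acyclic envy-graph.

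For the new Phase~3 I would fix the common ordering witnessing the IDO property, restrict it to $M^-$, and list the bads as $b_1, b_2, \dots$ with $v_{ib_1} \le v_{ib_2} \le \dots < 0$ for \emph{every} agent $i$, processing them from the most painful $b_1$ onward. To place $b_t$, I would use the top-envy graph $G^*_\x$ in the style of Algorithm~\ref{alg:ido-chores}: every vertex of $G^*_\x$ has out-degree at most one, so $G^*_\x$ is a functional graph and hence contains a sink (an agent who envies nobody) or a directed cycle. While no sink exists, pick a cycle $C$ and reallocate its bundles backwards along $C$; then assign $b_t$ to a sink and continue. For the running time, within the handling of a single bad each reallocation makes every agent of $C$ envy-free without turning any previously envy-free agent envious (bundle \emph{contents} never change, only their holders), so fewer than $n$ reallocations precede each bad-assignment; with Phases~1--2 polynomial, the whole procedure is polynomial.

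Correctness rests on two invariants, both obtained by the same bundle-wise reasoning used elsewhere in the paper. (a) Assigning a pure bad $b_t$ to a sink preserves \efx~(resp. \efxzero): this is exactly the argument of Lemma~\ref{lem:phase3} (and its \efxzero~analogue), whose only use of the identical-bads hypothesis is that every bad already held by the sink was processed earlier and is therefore weakly more painful for her, i.e. $v_{ib'} \le v_{ib_t}$ for every such $b'$ — which is precisely the IDO property — while the pure-badness of $b_t$ is what guarantees that handing it out does not spoil any other agent's guarantee toward the sink. (b) Reallocating backwards along a cycle $C$ of the top-envy graph preserves \efx~(resp. \efxzero): each $i \in C$ ends up holding the bundle she most prefers among all bundles, so $v_i(\x'_i) \ge v_i(\x'_h)$ for every $h$ and $i$ trivially satisfies clause~(i) of the definition toward everyone; and every $h \notin C$ keeps her bundle while each bundle she is compared against is merely relocated among the agents of $C$, so her \efx~(resp. \efxzero) status is unaffected, since both clauses depend only on the two bundles involved, not on who holds them. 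Starting from the \efx~(resp. \efxzero) allocation after Phase~2, the two invariants keep the allocation \efx~(resp. \efxzero) throughout; by Lemma~\ref{lem:efx-propm} it is then also \propmx~(resp. \propmxzero).

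The crux, I expect, is invariant (b): one must \emph{not} use ordinary envy-cycle elimination here. In a mixed manna, after a plain envy-cycle swap the agent who inherited a new bundle can violate \efx~toward a third party, because the condition is now evaluated with \emph{her} valuation on her \emph{new} bundle and clause~(ii) — deleting one of \emph{her own} bads — may become unsatisfiable. Reallocating along a cycle of the \emph{top}-envy graph dodges this because the participating agent becomes completely envy-free, so clause~(i) holds outright. I would also note that, unlike the identical-bads case, PO is genuinely lost here: when the bads are only IDO, handing the most painful unallocated bad to a sink need not maximize social welfare, so the envy-graph may acquire cycles — which is exactly why the top-envy cycle-elimination step is needed in the first place.
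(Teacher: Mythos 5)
Your proposal is correct and follows essentially the same route as the paper: reuse Phases~1--2 of Algorithm~\ref{alg:restrictedefxpo} (resp.\ Algorithm~\ref{alg:binaryefxpo}), then allocate the pure bads from most to least painful in the common IDO order, giving each to a sink of the \emph{top}-envy graph after resolving top-envy cycles, with exactly the two invariants (sink assignment preserves \efx{} because earlier-assigned bads are weakly more painful; top-envy cycle swaps preserve \efx{} because cycle agents become envy-free and outside agents compare against an unchanged multiset of bundles) that the paper's proof of Theorem~\ref{thm:restrictedgoodsefx-ido} uses. Your added remarks on why ordinary envy-cycle elimination would fail and why PO is lost are accurate but not points of divergence.
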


The algorithm is almost identical to Algorithm~\ref{alg:restrictedefxpo}, with the only Phase 3 changed to allocate the IDO bads in the order $1$ to $m^-$, where $v_{i1}\le v_{i2}\le \dots v_{i{m^-}}$, for every $i\in N$, like in Algorithm~\ref{alg:ido-chores}. We replace Phase 3 of
Algorithm~\ref{alg:restrictedefxpo} with the following Algorithm~\ref{alg:restrictedefx-ido}.

\begin{algorithm}[h]
\caption{EFX for Restricted Mixed Goods \& IDO Bads}\label{alg:restrictedefx-ido}
\textbf{Input:} Instance $(N,M,V)$ with binary mixed goods and identical bads\\
\textbf{Output:} Allocation $\x$
\begin{algorithmic}[1]
\Statex \textit{Phase 3: Allocating $M^-$ \dotfill}
\State Order bads in $M^-$ in IDO order $v_{i1} \leq v_{i2} \leq \hdots \leq v_{im^-}$
\While{$M^- \neq \emptyset$}
\State Pick smallest $j\in M^-$ according to $\prec$
\State Let $G^*_\x = (N, E)$ be the top-envy-graph defined by $\x$ \Comment*{Def.~\ref{def:envy-graph}}
\While{$\nexists$ any sinks in $G^*_\x$}
\State Let $C$ be a cycle in $G^*_\x$
\State Reallocate bundles according to $C$ \Comment{See text}
\EndWhile
\State Let $i$ be a sink in $G^*_\x$ \Comment{Def.~\ref{def:envy-graph}}
\State $\x_i \gets \x_i + j$ \Comment*{Assign $j$ to $i$}
\State $M^- \gets M^- - j$
\EndWhile
\State \Return $\x$
\end{algorithmic}
\end{algorithm}

\begin{proof}[Proof of Theorem~\ref{thm:restrictedgoodsefx-ido}]

Suppose $\x$ is a partial EFX allocation prior to allocating an item $j\in M^-$. Suppose we allocate $j$ to a sink agent $i$ in the top-envy-graph $G^*_\x$ to obtain an allocation $\x'$. We will show later that resolving cycles in $G^*_\x$ preserves EFX. For sake of contradiction assume $\x'$ is not EFX. Since $\x_h = \x'_h$ for every $h \neq i$, any possible violation of the EFX condition must involve agent $i$. We claim that $i$ is EFX towards any other agent $h\neq i$ in $\x'$. To see this, note that for any bad $j'\in \x'_i$, $v_i(\x'_i - j') = v_i(\x_i) + v_{ij} - v_{ij'} \ge v_i(\x_i)$ since $j'<j$. Now since $i$ is a sink, $v_i(\x_i) \ge v_i(\x_h)$, thus implying that $v_i(\x'_i - j') \ge v_i(\x_h)$ for every bad $j'\in \x'_i$ and any agent $h\neq i$, showing that $i$ does not EFX-envy any other agent. 

Now suppose some agent $h \neq i$ EFX-envies $i$ in $\x'$. Since $\x$ was EFX, either (1) $v_h(\x_h) \ge v_h(\x_i- g)$ for every $g\in\x_i$ with $v_{hg} > 0$, or (2) $v_h(\x_h - j') \ge v_h(\x_i)$ for every $j'\in \x_h$ with $v_{hj'} < 0$. If Condition (1) holds, then we also obtain $v_h(\x'_h) = v_h(\x_h) \ge v_h(\x'_i- g)$ for every $g\in\x_i$ with $v_{hg} > 0$, since $\x'_h = \x_h$ and $\x'_i = \x_i + j$ but $v_{hj} \le 0$. Thus $h$ continues to be \efx~towards $i$ in $\x'$. If Condition (2) holds, then, using $v_{hj}\le 0$ we obtain $v_h(\x'_h - j') = v_h(\x_h - j') \ge v_h(\x_i) \ge v_h(\x'_i)$ for every $j'\in \x'_h$ with $v_{hj'} < 0$, once again showing that $h$ does not EFX-envy $i$ in $\x'$.

Next we argue that resolving any cycles in $G^*_\x$ preserves EFX. Suppose we reallocate the bundles according to a top-envy cycle
$C$ in $G^*_\x$. For any agent $i$ who is not in the cycle, her bundle is not changed by the reallocation. Although other bundles are
reallocated, the items in each bundle do not change and thus $i$ does not EFX-envy any other agent. For any agent $i \in C$, she will 
obtain her best bundle in this partial allocation $\x$, since $G^*_\x$ is the top-envy graph, and hence the cycle-swapped allocation is
envy-free for agent $i$. Therefore, resolving any cycles in $G^*_\x$ preserves EFX. Thus $\x'$ is EFX.

The case of binary mixed goods follows by the same analysis and the fact that the partial allocation of Algorithm~\ref{alg:binaryefxpo},
after the end of Phase $2$ is \efxzero+PO.
\end{proof}

\subsection{\propmx~for Restricted Mixed Goods with General Bads}

Our final result shows that we can obtain \propmx~allocations for the restricted mixed goods setting with general bads. To achieve
this, we utilize our reduction in Lemma~\ref{lem:gen-separ} from general bads instances to IDO bads instances. Note that the
allocation is not guaranteed to be Pareto Optimal.

\begin{theorem}\label{thm:restrictedgoodsefx-gen}
Given a fair division instance of restricted mixed goods, a \propmx~allocation can be computed in polynomial-time. Furthermore, for the case of binary mixed goods, a \propmxzero~allocation can be computed in polynomial-time.
\end{theorem}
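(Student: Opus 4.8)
The plan is to reduce the general-pure-bads case to the IDO-pure-bads case, for which Theorem~\ref{thm:restrictedgoodsefx-ido} already produces an \efx~(resp.\ \efxzero) allocation, and then invoke Lemma~\ref{lem:efx-propm} to upgrade \efx~to \propmx~(resp.\ \efxzero~to \propmxzero). The only missing ingredient is that Lemma~\ref{lem:gen-separ} is phrased for \emph{separable} instances, whereas here the ``non-bad'' part of the instance consists of the mixed goods $M^+$ and dummy bads $M^0$ rather than a commonly-agreed set $M^{\ge 0}$. So the first step is to observe that the reduction underlying Lemma~\ref{lem:gen-separ} applies verbatim to the restricted-mixed-goods setting, since it only manipulates pure bads.

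Concretely, given a restricted mixed goods instance $I=(N,M,V)$ with general pure bads, I would build $I'=(N,M,V')$ by setting $v'_{ij}=v_{ij}$ for every $j\notin M^-$ and, for the pure bads, relabelling each agent's pure bads in increasing order of value (i.e.\ $v'_{ij}=v_{i\sigma_i(j)}$ where $\sigma_i(j)$ is the $j$-th worst pure bad for $i$), exactly as in Lemma~\ref{lem:gen-separ}. Because only pure-bad valuations are modified, $I'$ is still a restricted mixed goods instance, and its pure bads are now IDO. Applying Theorem~\ref{thm:restrictedgoodsefx-ido} to $I'$ yields an \efx~allocation $\x'$ (an \efxzero~allocation when the mixed goods are binary), which by Lemma~\ref{lem:efx-propm} is \propmx~(resp.\ \propmxzero) for $I'$.

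Next I would reconstruct an allocation $\x$ for $I$ from $\x'$ via the greedy re-assignment of pure bads from Lemma~\ref{lem:gen-separ}: keep the assignment of all items in $M^+\cup M^0$ unchanged, and process pure bads in decreasing order of their $V'$-index, letting the agent who holds pure bad $j$ in $\x'$ pick her favourite still-unallocated pure bad. The same pigeonhole argument as in Lemma~\ref{lem:gen-separ} produces a bijection $f_i:\x_i\to\x'_i$ with $v_{ie}\ge v'_{if_i(e)}$ for every pure bad $e\in\x_i$, and guarantees $|\x_i|=|\x'_i|$, $v_i(M)=v'_i(M)$, and — crucially — $d_i(\x)=d_i(\x')$, the last because $\x$ and $\x'$ agree on the mixed goods that determine $d_i$. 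From these invariants, the two-case analysis at the end of the proof of Lemma~\ref{lem:gen-separ} transfers the \propmx~(resp.\ \propmxzero) guarantee from $\x'$ to $\x$ agent by agent, and the whole procedure runs in polynomial time.

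The one point that genuinely requires care — and the step I would dwell on — is verifying that nothing in the reduction of Lemma~\ref{lem:gen-separ} secretly used separability rather than merely the fact that every agent values each pure bad negatively. In particular one must check that $d_i(\cdot)$, which ranges over positively-valued items in the other bundles, is untouched by the pure-bad re-assignment (it is, since pure bads never contribute to $d_i$), and that Theorem~\ref{thm:restrictedgoodsefx-ido} really is applicable to $I'$ as a restricted mixed goods instance with IDO bads (it is, by construction of $V'$). Finally I would note that Pareto-optimality is not preserved, since the reconstruction step changes which agent receives which pure bad, so the theorem only asserts \propmx / \propmxzero and not PO.
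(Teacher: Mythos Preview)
Your proposal is correct and takes essentially the same approach as the paper: reduce the pure-bad side to IDO via the construction of Lemma~\ref{lem:gen-separ}, invoke Theorem~\ref{thm:restrictedgoodsefx-ido} together with Lemma~\ref{lem:efx-propm} on the resulting IDO instance, and then push the \propmx~(resp.\ \propmxzero) guarantee back through the greedy pure-bad re-assignment. Your explicit check that Lemma~\ref{lem:gen-separ} does not secretly rely on separability---in particular that $d_i(\cdot)$ is unaffected because pure bads never contribute to it---is exactly the point the paper makes when it notes that the reduction ``does not distinguish between $M^+$ and $M^0$'' and hence carries over to the restricted mixed goods setting.
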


\begin{proof}
Given any fair division instance of mixed goods with restricted valuations $I = (N,M,V)$, we first turn it into a new fair division
instance of mixed goods with restricted valuations and IDO bads $I' = (N,M,V')$, via the reduction of Lemma~\ref{lem:gen-separ}
on the set of pure bads $M^-$. Notice that, as long as the algorithm for IDO bads that is executed on the $I'$ returns an
allocation $\x$ that is \propmx+PO, we know that any item $j \notin M^-$ has been allocated in $\x$ to an agent that has non-negative 
value for it. Thus, since the reduction of Lemma~\ref{lem:gen-separ} does not distinguish between $M^+$ and $M^0$, it still holds in 
this setting as well.

Next, notice that the partial allocation of Algorithm~\ref{alg:restrictedefxpo} after the end of Phase $2$ has allocated all
items $j \notin M^-$. Also, it is \efx+PO, and thus by Lemma~\ref{lem:efx-propm}, it is \propmx+PO. Therefore, one can use the
partial allocation of pure bads of Phase $3$ for $I'$ as a guide on how to allocate the items in $M^-$ and maintain \propmx,
as Lemma~\ref{lem:gen-separ} describes.

The case of binary mixed goods follows by the same analysis and the fact that the partial allocation of Algorithm~\ref{alg:binaryefxpo},
after the end of Phase $2$ is \propmxzero+PO.
\end{proof}

\section{Discussion}\label{sec:discussion}
In this paper, we study the fair and efficient allocation of an indivisible mixed manna. We measured efficiency through Pareto-optimality, and fairness through EFX and PropMX, where PropMX combines PropM for goods and PropX for bads. We obtained polynomial time algorithms to find allocations that satisfy a mix of these guarantees for several classes of instances, namely separable, restricted mixed goods, and binary mixed goods. 

Much like EFX+PO, settling the existence of PropM+PO allocations is a challenging open problem even for goods manna. One can also ask whether allocations satisfying more demanding fairness notions than PropM exist, both for goods and the more general mixed manna setting. Lastly, the existence of weighted-PropMX allocations for agents with unequal entitlements is an interesting direction for future work.




\bibliography{references}

\appendix
\section{Counterexamples}\label{app:counterexamples}

\subsection{Non-Existence of PropX for Goods}\label{app:counter-propx}

In this section, we present an example that shows PropX allocations need not exist for goods, even for the simpler setting of identical
valuations. The example is due to Aziz, Moulin and Sandomirskiy \cite{MoulinAnnual, AzizMoulinSandom}. We present it here as well for completeness.

Consider three agents $a$ and $b$, and five goods $g_1, g_2, g_3, g_4$ and $g_5$. The agents' valuations for the 
goods are the following:
\[
\begin{array}{|c|c|c|c|c|c|}
\hline
        & g_1   & g_2   & g_3   & g_4   & g_5   \\
\hline
     a  & 3     & 3     & 3     & 3     & 1     \\
     b  & 3     & 3     & 3     & 3     & 1     \\
     c  & 3     & 3     & 3     & 3     & 1     \\
\hline
\end{array}
\]
\vspace{1em}

Notice that the proportional share for every agent is $\frac{13}{3}$ and that in any balanced allocation one agent gets two items of
value $3$, one agent gets one item of value $3$ along with $g_5$, which has value $1$ and one agent gets only one item of value $3$. 
Without loss of generality, let $\x_a = \set{g_1, g_4}, \x_b = \set{g_2, g_5}$ and $\x_c = \set{g_3}$. Notice that $v_c(\x_c) = 3$, and
even if $c$ receives $g_5$, we have $v_c(\x_c) + v_{c5} = 4 < \frac{13}{3}$. Thus, there exists no PropX allocation for this instance.

\subsection{Non-Existence of \propmzero+PO}\label{app:counter}

In this section, we present an example that shows \propmzero+PO allocations need not always exist, even for the simpler setting of 
restricted goods and no bads. Consider two agents $a$ and $b$, and three goods $g_1, g_2$ and $g_3$. The agents' valuations for the 
goods are the following:
\[
\begin{array}{|c|c|c|c|}
\hline
        & g_1   & g_2   & g_3   \\
\hline
     a  & 1     & 0     & 2     \\
     b  & 0     & 1     & 2     \\
\hline
\end{array}
\]
\vspace{1em}

Notice that for any PO allocation $\x$, we have $g_1 \in \x_a$ and $g_2 \in \x_b$. Also notice that $v_a(M) = v_b(M) = \frac{3}{2}$. 
However, either $a$ or $b$ will not receive $g_3$. Assume without loss of generality that $g_3 \in \x_a$. We have $d_b(\x) = 0$, since
the maximin item for agent $b$ is $g_1$, and thus $v_b(\x_b) + d_b(\x) = 1 + 0 < 3/2$. Therefore, there exists no \propmzero+PO
allocation for this instance.


\end{document}